\documentclass[11pt]{article}

\usepackage[utf8]{inputenc}
\usepackage{graphicx}
\usepackage{xcolor}
\usepackage{amsmath}
\usepackage{amssymb}
\usepackage{amsthm}
\usepackage{mathtools}
\usepackage{bm}
\usepackage{cuted}
\usepackage{lineno}
\usepackage{caption}
\usepackage{subcaption}
\usepackage{algorithm}
\usepackage{algpseudocode}
\usepackage{amsmath}
\usepackage{blkarray}

\algtext*{EndFor}
\algtext*{EndIf}
\algtext*{EndFunction}

\newtheorem{theorem}{Theorem}
\newtheorem{lemma}{Lemma}
\newtheorem{corollary}{Corollary}
\newtheorem{proposition}{Proposition}

\usepackage[letterpaper]{geometry}
\title{A Geometric View of Combinatorial Fiedler Theory}

\author{
José Fern\'andez Goycoolea\thanks{\emph{Corresponding author}. Departamento de Matem\'atica y F\'isica, Universidad de Magallanes, Avenida Bulnes 01855, Punta Arenas, Chile,
{\tt jose.fernandezg@umag.cl}, https://orcid.org/0000-0001-5349-4348.} 
\and
Andrea de las Heras-Parrilla\thanks{Universidad Francisco de Vitoria, Spain, and Departament de Matem\`atiques, Universitat Polit\`ecnica de Catalunya, Spain. 
Supported by project PID2023-150725NB-I00 funded by MICIU/AEI/10.13039/501100011033. {\tt andrea.delasheras@ufv.es}
https://orcid.org/0000-0002-7219-9771.
}
\and
Luis H. Herrera\thanks{Departamento de Inform\'atica y Computaci\'on, Universidad Tecnol\'ogica Metropolitana, 
Jos\'e Pedro Alessandri 1242, \~{N}u\~{n}oa, Santiago de Chile 7800002, Región Metropolitana, 
Chile, 
{\tt luis.herrerab@utem.cl}, https://orcid.org/0000-0001-7338-7611.}
\and
Clemens Huemer\thanks{Departament de Matem\`atiques, Universitat Polit\`ecnica de Catalunya, Spain. Supported by project PID2023-150725NB-I00 funded by MICIU/AEI/10.13039/501100011033. {\tt clemens.huemer@upc.edu}, https://orcid.org/0000-0001-7557-0823.
}
\and
Carlos Seara\thanks{Departament de Matem\`atiques, Universitat Polit\`ecnica de Catalunya, Spain. Supported by project PID2023-150725NB-I00 funded by MICIU/AEI/10.13039/501100011033. {\tt carlos.seara@upc.edu}, https://orcid.org/0000-0002-0095-1725.
}}

\begin{document}
%\linenumbers
\maketitle

\begin{abstract}
    Recently, Andrade and Dahl introduced combinatorial Fiedler theory by studying a parameter $b(G)$ defined as the $\ell_1$-analog of the Rayleigh quotient minimization characterization of the algebraic connectivity of a graph $G=(V,E)$. In this work, we study the corresponding maximization problem, which plays the role of the $\ell_1$-analog of the largest Laplacian eigenvalue. We show that the new parameter $B(G)$ associated with this maximization problem admits a simple exact description: it is the average of the two largest vertex degrees of $G$.

    A unified combinatorial treatment of the minimization and maximization problems is presented first. Later, both optimization problems are reinterpreted in a geometrical setting. The feasible set is identified with a $(n-2)$-dimensional cuboctahedron shell where $n=|V|$. Additional structure is presented for this polyhedron, including the fact that maximizing solutions arise at its vertices and minimizing solutions arise at the centers of its facets.

    Finally, we analyze the number of optimal vectors for $b(G)$ and $B(G)$ for several graph families. Although the value of $B(G)$ is determined by the two largest degrees, we prove that counting the vectors that attain this value is actually $\#\mathrm{P}$-complete.
\end{abstract}

\bigskip

{\bf Keywords:} Combinatorial Fiedler theory, Optimization, Polyhedral geometry.

\newpage

\section{Introduction}\label{sec:intro}

    Spectral graph theory studies graphs through the eigenvalues and eigenvectors of matrices naturally associated with them, such as the adjacency and Laplacian matrices. An early central example of this approach is Fiedler's algebraic connectivity $a(G)$~\cite{fiedler1973algebraic}, the second-smallest Laplacian eigenvalue, which admits a Rayleigh quotient characterization as a minimization problem; more precisely, let $G=(V,E)$ be a simple graph, then
    \[a(G) = \min_{\bm x\in\mathbf{R}^V}
        \left\{\sum_{uv\in E}(x_u-x_v)^2~:~\sum_{v\in V}x_v=0~\wedge~\sum_{v\in V}x_v^2=1\right\};\]
    see~\cite{brouwer2011spectra} for a systematic treatment. The corresponding eigenvectors, now called Fiedler vectors, assign a real coordinate to each vertex and can therefore be interpreted as one-dimensional representations of the graph. This interpretation underlies their use in graph partitioning and graph drawing~\cite{hall1970r,koren2005drawing}.

    A recent contribution of Andrade and Dahl~\cite{andrade2024combinatorial} initiates what they call \emph{combinatorial Fiedler theory}, replacing the $\ell_2$-norm in the classical characterization of the $a(G)$ parameter by the $\ell_1$-norm. They introduce the parameter
    \begin{equation}\label{eq:b(G)}
        b(G) = \min_{\bm x\in\mathbf{R}^V}
        \left\{\sum_{uv\in E}|x_u-x_v|~:~\sum_{v\in V}x_v=0~\wedge~\sum_{v\in V}|x_v|=1\right\},
    \end{equation}
    which may be regarded as an $\ell_1$ analog of the algebraic connectivity.\footnote{Note that throughout this work the indexation of the vectors $\bm x\in\mathbf{R}^V$ is done directly over the elements of $V$ to unclutter the notation as in the presentation by Brouwer and Haemers~\cite{brouwer2011spectra}.} The resulting problem preserves the flavor of classical theory, but shifts it from a primarily linear-algebraic setting to a discrete combinatorial one, closely related to sparsest cuts. 

    The main purpose of this paper is to extend this theory by studying its natural maximization counterpart. We introduce the parameter
    \begin{equation}\label{eq:B(G)}
        B(G)=\max_{\bm x\in\mathbf{R}^V}
        \left\{\sum_{uv\in E}|x_u-x_v|~:~\sum_{v\in V}x_v=0~\wedge~\sum_{v\in V}|x_v|=1\right\}.
    \end{equation}
    Both parameters are defined as optimization problems over the same feasible region, with the same objective function. From the perspective of the classical \(\ell_2\)-theory, they may be regarded as the \(\ell_1\)-analogs of the two extremal Laplacian eigenvalues \(\lambda_2\) and \(\lambda_n\). While the minimization problem already admits a rich combinatorial interpretation~\cite{andrade2024combinatorial,KannanRoy2026}, the maximization problem appears not to have been systematically investigated before. One of our main results shows that \(B(G)\) has a simple exact form: if \(d_1\) and \(d_2\) are the two largest vertex degrees of \(G\), with possibly \(d_1=d_2\), then \(B(G)=\frac12(d_1+d_2)\).
    Degree-sum expressions are familiar in the study of extremal Laplacian parameters, for instance in upper bounds for the largest Laplacian eigenvalue~\cite{anderson1985eigenvalues,zhang2011laplacian}. Here, however, the degree average equals \(B(G)\).
    
    As a first step, we extend some of the results of Andrade and Dahl so that they apply simultaneously to the minimization and maximization problems. In particular, both extrema admit solutions whose positive entries are all equal and whose negative entries are all equal. Thus, although the feasible region has infinite cardinality, both optimization problems can be reduced to a finite set \(\mathcal{G}\subset\mathbf{R}^V\) of structured vectors. This set has a natural combinatorial interpretation: its elements are in bijection with the ordered quasi-bipartitions \((P,N)\) of the vertex set, where \(P\) records the positive coordinates, \(N\) records the negative coordinates and the remaining vertices have coordinate zero. Under this correspondence, the objective function is expressed in terms of relative cut sizes, giving cut-based formulas for both \(b(G)\) and \(B(G)\). For the minimization problem, this recovers the characterization of Andrade and Dahl~\cite{andrade2024combinatorial}; for the maximization problem, it leads to the degree formula stated above.

    The second main contribution of our work is to reinterpret these optimization problems geometrically. The common feasible region can be viewed as the relative boundary of a hyperplane section of the \(\ell_1\) unit ball. This places the problem in a natural polyhedral setting related to the \(d\)-dimensional cuboctahedron~\cite{doehlert1972experimental} and to root polytopes~\cite{ardila2010root}. 

    Related \(\ell_1\)-spectral approaches to graph partitioning have also been studied in the context of Cheeger cuts~\cite{chung1997spectral}. In particular, Chang, Shao and Zhang use a cell decomposition of the feasible set for the graph \(1\)-Laplacian Cheeger problem, noting that the objective function is convex on each cell~\cite{chang2016one}. This is close in spirit to the approach taken here: we also decompose an \(\ell_1\)-type feasible region into cells adapted to the objective function; in our setting, this refinement makes the objective affine on each cell.

    The paper is organized as follows. Section~\ref{sec:preliminaries} develops the finite combinatorial formulation of the two optimization problems and proves the degree formula for \(B(G)\). Section~\ref{sec:geometry} studies the geometry of the common feasible region, its cell decomposition, and the induced structure on \(\mathcal{G}\) to end with alternative geometric proofs of some of the main results. Section~\ref{sec:examples} applies the studied results to several graph families and counts the corresponding \(\ell_1\)-Fiedler vectors of $\mathcal{G}$ in these examples. Section~\ref{sec:hardness} proves the \(\#\mathrm{P}\)-completeness of counting the vectors in \(\mathcal{G}\) that attain \(B(G)\).

\section{Preliminaries}\label{sec:preliminaries}   
    Notice that both optimization problems, of Equation~\eqref{eq:b(G)} and Equation~\eqref{eq:B(G)}, are defined over the same \textbf{feasible set} $\mathcal{F}\subset\mathbf{R}^V$ defined by $\mathcal{F}=\{\bm x\in\mathbf{R}^V~:~\sum_{v\in V}x_v=0~\wedge~\sum_{v\in V}|x_v|=1\}$ and over the same \textbf{objective function} $f:\mathbf{R}^V\rightarrow\mathbf{R}_{\geq0}$ given by $f(\bm x)=\sum_{uv\in E}|x_u-x_v|$, so we can rewrite Equations~\eqref{eq:b(G)} and \eqref{eq:B(G)} as
\begin{equation}\label{eq:opt_problems_F}
    b(G)=\min_{\bm x\in\mathcal{F}}f(\bm x),~~B(G)=\max_{\bm x\in\mathcal{F}}f(\bm x).    
\end{equation}
    
    The feasible set can be alternatively understood as the set of vectors whose positive components add to $1/2$ and whose negative components add to $-1/2$. 
    
\begin{lemma}[Andrade and Dahl~\cite{andrade2024combinatorial}]\label{lemm:+1/2_-1/2}
    $\displaystyle\mathcal{F}=\Bigg\{\bm x\in\mathbf{R}^V~:~\sum_{\substack{v\in V\\x_v\geq0}}x_v=\frac1{2}~\wedge~\sum_{\substack{v\in V\\x_v\leq0}}x_v=-\frac1{2}\Bigg\}.$
\end{lemma}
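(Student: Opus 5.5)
We prove the set equality by double inclusion, writing $S_+(\bm x)=\sum_{x_v\geq0}x_v$ and $S_-(\bm x)=\sum_{x_v\leq0}x_v$. Since a coordinate equal to zero contributes nothing to either sum, including it in both index sets is harmless. This gives the two identities
\[\sum_{v\in V}x_v=S_+(\bm x)+S_-(\bm x),\qquad \sum_{v\in V}|x_v|=S_+(\bm x)-S_-(\bm x),\]
valid for every $\bm x\in\mathbf{R}^V$. The second uses $|x_v|=x_v$ when $x_v\ge 0$ and $|x_v|=-x_v$ when $x_v\le 0$. The first identity needs the remark about zeros, because otherwise a zero coordinate would seem to be counted twice; it contributes $0$, so the identity holds. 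I expect this to be the only point requiring any care.

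With these identities the statement becomes the equivalence of two $2\times2$ linear systems in $(S_+,S_-)$.

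For the inclusion $\subseteq$, take $\bm x\in\mathcal{F}$. Then $S_++S_-=0$ and $S_+-S_-=1$. Adding and subtracting the two equations gives $S_+=\tfrac12$ and $S_-=-\tfrac12$.

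For the inclusion $\supseteq$, assume $S_+=\tfrac12$ and $S_-=-\tfrac12$. The two identities then give $\sum_v x_v=0$ and $\sum_v|x_v|=1$, so $\bm x\in\mathcal{F}$.

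The linear map $(a,b)\mapsto(a+b,a-b)$ is invertible, so these two directions together establish the equality of sets claimed in Lemma~\ref{lemm:+1/2_-1/2}.
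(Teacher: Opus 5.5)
Your proof is correct: the identities $\sum_v x_v=S_+(\bm x)+S_-(\bm x)$ and $\sum_v|x_v|=S_+(\bm x)-S_-(\bm x)$ reduce the claim to an invertible $2\times2$ linear system, from which both inclusions follow. Both identities, not only the first, rely on zero coordinates contributing nothing to either sum. The paper does not reprove this lemma, since it cites Andrade and Dahl. Your argument is nonetheless the same positive/negative-part bookkeeping the paper uses in the proof of Lemma~\ref{lemm:barF=CH}: there $\bm x=\bm x^+-\bm x^-$ with $\sum_v x^+_v=\sum_v x^-_v=s$ and $\|\bm x\|_1=2s$.
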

 
    A vector that realizes one of the optimization problems of Equation~\eqref{eq:opt_problems_F} is called an $\ell_1$-Fiedler vector of that problem. That is, for example, that $\bm x\in\mathcal{F}$ is an $\ell_1$-Fiedler vector for $B(G)$ if and only if $f(\bm x)=B(G)$. A graph \(G\) might have many different \(\ell_1\)-Fiedler vectors; understanding how many there are is a natural question, analogous to the role played in the classical \(\ell_2\)-theory by the multiplicity of the Fiedler eigenvalue. We return to these counting questions later in the paper.
    
    Among the $\ell_1$-Fiedler vectors of a graph $G$, those of the form described in the next lemma yield the combinatorial characterization of $b(G)$ and $B(G)$. The following Lemma~\ref{lemm:same_pos_same_neg} corresponds to the first part of Theorem 3.2 in~\cite{andrade2024combinatorial}. The proof given here follows the same general approach as the one in~\cite{andrade2024combinatorial}. We nevertheless include it because it amends the final part of the argument and also extends it to the maximization case.

\begin{lemma}\label{lemm:same_pos_same_neg}
    For both optimization problems of Equation \eqref{eq:opt_problems_F}, there exist $\ell_1$-Fiedler vectors for which each positive entry has the same value, and each negative entry has the same value.
\end{lemma}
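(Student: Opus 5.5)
Both problems in \eqref{eq:opt_problems_F} have optima, since $\mathcal{F}$ is compact and $f$ is continuous. I will take an optimal $\bm x\in\mathcal{F}$ and modify it, without changing $f$, until it has the required form.

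\textbf{Sorting and merging within the positive part.} Fix an optimal $\bm x$ and order its distinct positive values $p_1>p_2>\dots>p_k>0$, with level sets $P_1,\dots,P_k$. Suppose $k\ge2$. I perturb only the entries of the top class $P_1$: for $t$ in a small interval, set every $x_v$ with $v\in P_1$ to $p_1+t/|P_1|$, and set every $x_v$ with $v\in P_2$ to $p_2-t/|P_2|$. This keeps the sum of the positive entries equal to $1/2$ and leaves the negative entries untouched, so by Lemma~\ref{lemm:+1/2_-1/2} the new vector $\bm x(t)$ stays in $\mathcal{F}$. As long as the strict order of all values is preserved, that is, for $t$ in an open interval around $0$ ending where $P_1$ and $P_2$ collide or $P_2$ meets $P_3$ (or $0$), each term $|x_u-x_v|$ keeps its sign. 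Hence $f(\bm x(t))$ is affine in $t$ on this interval.

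An affine function with an interior extremum at $t=0$, whether a minimum or a maximum, must be constant. So I can push $t$ to an endpoint of the interval while keeping $f$ optimal, and by continuity the endpoint vector is still in $\mathcal{F}$ and still optimal. At the endpoint one of three things happens:
\begin{itemize}
\item $P_1$ and $P_2$ merge;
\item $P_2$ merges with $P_3$;
\item $P_2$ reaches $0$, in which case its vertices become zero entries.
\end{itemize}
In every case the number $k$ of distinct positive values strictly decreases. Iterating, I reach $k=1$. A simpler choice of move would be to shift only a single level set and compensate with a neighbouring one; any such two-level move works, provided it preserves both sums in Lemma~\ref{lemm:+1/2_-1/2}.

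\textbf{Negative part.} I apply the same argument to the negative values, perturbing two adjacent negative levels against each other. These moves never change the positive entries, so the positive part stays constant. I need to make sure that $k$ for the positives cannot collapse to zero. This holds because the positive entries sum to $1/2$, so some positive entry always remains. Likewise the negatives sum to $-1/2$, so the process ends with exactly one positive value and one negative value.

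\textbf{Main obstacle.} The delicate step, and presumably the part of the original argument being amended, is checking that the move really stays inside $\mathcal{F}$ and that $f$ is genuinely affine on the whole open interval. One subtlety is edges inside a single level class, where $|x_u-x_v|=0$ stays zero. Another is the moment an entry hits $0$, where the sign pattern changes but continuity handles it. I also have to argue termination: each step reduces the total number of distinct nonzero values, so the process stops after finitely many steps.
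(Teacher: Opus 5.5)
Your proof is correct and is essentially the paper's argument: a sum-preserving two-level perturbation of the positive entries makes $f$ affine in the perturbation parameter, so optimality (for either the minimum or the maximum) forces zero slope. You then slide to an endpoint where the number of distinct positive values drops, and repeat the argument for the negative entries. The differences are cosmetic: you iterate constructively and move the top two levels $P_1,P_2$ against each other, whereas the paper takes an optimal vector minimizing the number $\kappa(\bm x)$ of distinct positive values, shifts the smallest and largest positive levels, and derives a contradiction.
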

    
\begin{proof}
    The existence of an $\ell_1$-Fiedler vector for which all positive entries are equal is proven by contradiction. To do so, we consider an $\ell_1$-Fiedler vector $\bm x\in\mathbf{R}^V$ such that the number $\kappa(\bm x)$ of different positive entries is as small as possible. Then the contradiction argument is constructed in two steps: first, we show that a modification of the vector $\bm x$ depending on a \emph{sufficiently small} $\varepsilon$ can be made so that it remains an $\ell_1$-Fiedler vector. Then we verify that an $\varepsilon$ in the valid range can be selected such that the number $\kappa(\bm x)$ is reduced.
    
    Consider the smallest and largest positive values among the entries of $\bm x$; that is, $m=\min\{x_v : x_v>0\}$ and $M=\max\{x_v : x_v>0\}$. If $m=M$, there is nothing to prove; all positive entries are equal. So assume $m<M$ and define the following partition of $V$ given by the vector $\bm x$:
    \begin{equation}\label{eq:V_lemma_partition}
    \begin{split}
        S_0 & = \{v\in V ~:~ x_v\leq0\}, \\
        S_1 & = \{v\in V ~:~ x_v=m\}, \\
        S_2 & = \{v\in V ~:~ m<x_v<M\}, \\
        S_3 & = \{v\in V ~:~ x_v=M\}. 
    \end{split}
    \end{equation}
    Given an $\varepsilon$ value small in magnitude but possibly positive or negative, we construct the vector $\bm x^\varepsilon\in\mathbf{R}^V$ in which the values of $v\in S_1$ are shifted by $\varepsilon$ and, in turn, the values of each $v\in S_3$ are shifted by an amount that compensates the positive sum in the characterization of $\mathcal F$ of Lemma~\ref{lemm:+1/2_-1/2}. That is 
    \begin{equation}
        x_v^\varepsilon=\left\{\begin{matrix*}[l]
            x_v + \varepsilon & \text{if } v\in S_1, \\
            x_v - \varepsilon|S_1|/|S_3| & \text{if } v\in S_3, \\
            x_v & \text{if } v\in S_0\cup S_2.
    \end{matrix*}\right.    
    \end{equation}
    With these shifts, if $\varepsilon$ is \emph{sufficiently small}, we have
    \begin{equation*}
        \begin{split}
        \sum_{\substack{v\in V\\x_v^\varepsilon\geq0}}x_v^\varepsilon=\sum_{\substack{v\in V\\x_v>0}}x_v^\varepsilon &=    
        \sum_{v\in S_1}x_v^\varepsilon+\sum_{v\in S_2}x_v^\varepsilon+\sum_{v\in S_3}x_v^\varepsilon \\
        &= \sum_{v\in S_1}\big(x_v+\varepsilon\big)+\sum_{v\in S_2}x_v+\sum_{v\in S_3}\left(x_v-\varepsilon\frac{|S_1|}{|S_3|}\right) \\
        &= \sum_{v\in S_1}x_v+\varepsilon|S_1|+\sum_{v\in S_2}x_v+\sum_{v\in S_3}x_v-\varepsilon\frac{|S_1|}{|S_3|}|S_3|=\sum_{\substack{v\in V\\x_v>0}}x_v=\frac{1}{2},
        \end{split}
    \end{equation*}
    and therefore, $\bm x^\varepsilon\in\mathcal F$. In this calculation, \emph{sufficiently small} explicitly means staying in the range
    \[-m<\varepsilon<\min\Big\{\min\{x_v:v\in S_2\cup S_3\}-m,~M-\max\{x_v:v\in S_1\cup S_2\}
    \Big\},\]
    as for this range, the partition for $\bm x$ of Equation~\eqref{eq:V_lemma_partition} stays invariant for $\bm x^\varepsilon$. Now, for $\varepsilon$ in this range, the difference in cost function can be described as a function of $\varepsilon$:
    \[\Delta(\varepsilon)=f(\bm x)-f(\bm x^\varepsilon)=\sum_{uv\in E}|x_u-x_v|-\sum_{uv\in E}|x_u^\varepsilon-x_v^\varepsilon|=\sum_{uv\in E}\Big(\underbrace{|x_u-x_v|-|x_u^\varepsilon-x_v^\varepsilon|}_{\Delta_{uv}(\varepsilon)}\Big),\]
    where $\Delta_{uv}(\varepsilon)=0$ for all $uv\in E$ except in the following four cases
    \begin{equation}\label{eq:delta_uv_cases}
        \begin{split}
            \text{(a) If } & u\in S_1\wedge v\in S_0 \text{ then } \Delta_{uv}(\varepsilon) = -\varepsilon, \\
            \text{(b) If } & u\in S_1\wedge v\in S_2 \text{ then } \Delta_{uv}(\varepsilon) = \varepsilon, \\
            \text{(c) If } & u\in S_1\wedge v\in S_3 \text{ then } \Delta_{uv}(\varepsilon) = \varepsilon\big(1+|S_1|/|S_3|\big), \\
            \text{(d) If } & u\in S_3\wedge v\in S_0\cup S_2 \text{ then } \Delta_{uv}(\varepsilon) = \varepsilon|S_1|/|S_3|. \\
        \end{split}
    \end{equation}
    Labeling by $N_a$, $N_b$, $N_c$, and $N_d$ the number of edges $uv\in E$ in each of the corresponding cases we can write
    \[\Delta(\varepsilon)=\left(-N_a+N_b+N_c+(N_c+N_d)\frac{|S_1|}{|S_3|}\right)\varepsilon=\eta\varepsilon.\]
    
    The key insight of the first part of the argument is that the number $\eta\in\mathbf{R}$ is fixed for $\varepsilon$ in the sufficiently small range; however, $\varepsilon$ can take both positive and negative values, so the only possibility consistent with the optimality of $\bm x$ is that $\eta=0$. Namely, if the optimization problem of $b(G)$ is considered, an $\varepsilon$ of the same sign as $\eta$ could be chosen, giving $\Delta(\varepsilon)>0$ and then $f(\bm x^\varepsilon)<f(\bm x)$; for $B(G)$, an $\varepsilon$ of opposite sign to $\eta$ would make $\Delta(\varepsilon)<0$ and then $f(\bm x^\varepsilon)>f(\bm x)$. Therefore, in both cases $\eta=0$ necessarily and $f(\bm x)=f(\bm x^\varepsilon)$ so $\bm x^\varepsilon$ is also $\ell_1$-Fiedler.

    Now, the idea for the second part of the argument is to increase $\varepsilon>0$ up to the limit of the sufficiently small range. That could be where the smallest positive coordinate of $\bm x$ has been increased up to the second smallest $\varepsilon=\min\{x_v:v\in S_2\cup S_3\}-m$, or where the largest coordinate has been reduced down to the second largest $\varepsilon=M-\max\{x_v:v\in S_1\cup S_2\}$ (or both things occur simultaneously). The key insight here is to realize that at this point, still $\bm x^\varepsilon\in\mathcal{F}$ and the $\Delta_{uv}(\varepsilon)$ values of Equation~\eqref{eq:delta_uv_cases} are the same, so $\bm x^\varepsilon$ is an $\ell_1$-Fiedler vector. However, $\kappa(\bm x^\varepsilon)<\kappa(\bm x)$, a contradiction.

    Finally, starting from this $\ell_1$-Fiedler vector, where all positive entries are the same, one can follow the analogous argument for the negative entries of the vector to produce the desired $\ell_1$-Fiedler vector. 
\end{proof}
    
    Consider the subset $\mathcal{G}\subset\mathcal{F}$ consisting of the vectors for which all positive entries have the same value and all negative entries have the same value; that is, $\mathcal{G}=\{\bm x\in\mathcal{F}\subset\mathbf{R}^V~:~x_u x_v>0 \Rightarrow x_u=x_v\}$. While $\mathcal{F}$ has infinite cardinality, $\mathcal{G}$ is finite, with $|\mathcal{G}|=3^n-2^{n+1}+1$. Nevertheless, Lemma~\ref{lemm:same_pos_same_neg} shows that the optimization problems in Equation~\eqref{eq:opt_problems_F} can be restricted to this set.
\begin{equation}\label{eq:opt_problems_G}
    b(G)=\min_{\bm x\in\mathcal{G}}f(\bm x),~~B(G)=\max_{\bm x\in\mathcal{G}}f(\bm x).
\end{equation}

    The combinatorial description of the parameters $b(G)$ and $B(G)$ comes from the fact that, for vectors in $\mathcal{G}$, the objective function $f(\bm x)=\sum_{uv\in E}|x_u-x_v|$ can be rewritten as the average of the relative cut sizes of a quasi-bipartition of $V$. This relies on the natural bijection between $\mathcal{G}$ and the set of ordered quasi-bipartitions of $V$, which is established in the next proposition. We define the relevant concepts first.
    
    For a subset $S\subset V$ of the vertex set of any graph $G=(V,E)$, the \textbf{cut} induced by $S$ is the set of edges $\delta(S) = \{uv\in E ~:~ u\in S \wedge v\in S^c\}$ that join a vertex of $S$ with a vertex of its complement $S^c=V\setminus S$. For a non-empty proper subset $S\subset V$, the \textbf{relative cut size} of $S$ is defined by the quotient
\begin{equation}\label{eq:xi(S)}
    \xi(S) = |\delta(S)|~/~|S|.
\end{equation}

    A quasi-bipartition of a set is simply a bipartition with the covering requirement relaxed. Thus, a pair $(S_1,S_2)$ of subsets of $V$ is an \textbf{ordered quasi-bipartition} of $V$ if $S_1,S_2\neq\varnothing$ and $S_1\cap S_2=\varnothing$. Let $\Gamma$ be the set of all ordered quasi-bipartitions of $V$, then we have the following result.

\begin{proposition}\label{prop:sum_xis}
    There is a natural bijection between the sets $\Gamma$ and $\mathcal{G}$ that sends each pair $(P,N)\in\Gamma$ to the point $\bm g_{\scriptscriptstyle P,N}=(g_v)_{v\in V}\in\mathcal{G}$ with components given by
    \[g_v=\left\{\def\arraystretch{1.4}\begin{array}{rl}
         \frac1{2|P|}, & \text{if } v\in P, \\
        -\frac1{2|N|}, & \text{if } v\in N, \\
            0, & \text{if } v\in V\setminus(P\cup N).
    \end{array}\right.\]
    Moreover,
    \[f(\bm g_{\scriptscriptstyle P,N})=\frac12\big(\xi(P)+\xi(N)\big).\]
\end{proposition}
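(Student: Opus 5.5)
The plan has two parts: first check that the map $(P,N)\mapsto \bm g_{\scriptscriptstyle P,N}$ is a well-defined bijection $\Gamma\to\mathcal{G}$, then compute the objective edge by edge.

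\emph{Well-definedness.} Since $P,N\neq\varnothing$ and $P\cap N=\varnothing$, the three cases in the definition of $g_v$ are disjoint. The positive entries of $\bm g_{\scriptscriptstyle P,N}$ are exactly those indexed by $P$, and they sum to $|P|\cdot\frac{1}{2|P|}=\frac12$. The negative entries are those indexed by $N$, and they sum to $-\frac12$. By Lemma~\ref{lemm:+1/2_-1/2}, $\bm g_{\scriptscriptstyle P,N}\in\mathcal{F}$. All positive entries are equal and all negative entries are equal, so $\bm g_{\scriptscriptstyle P,N}\in\mathcal{G}$.

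\emph{Bijection.} For the inverse, take $\bm x\in\mathcal{G}$ and set $P=\{v: x_v>0\}$ and $N=\{v:x_v<0\}$. These sets are disjoint. They are nonempty because, by Lemma~\ref{lemm:+1/2_-1/2}, the positive entries sum to $\frac12\neq 0$ and the negative ones to $-\frac12$. All positive entries share a common value $p$ with $|P|p=\frac12$, so $p=\frac1{2|P|}$; likewise the negative value is $-\frac1{2|N|}$. Hence $\bm x=\bm g_{\scriptscriptstyle P,N}$, which gives surjectivity. Injectivity holds because $(P,N)$ is recovered from the sign pattern of $\bm g_{\scriptscriptstyle P,N}$.

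\emph{Objective value.} Write $Z=V\setminus(P\cup N)$ and sort each edge $uv$ by where its endpoints lie.
\begin{itemize}
\item Edges inside $P$, inside $N$, or inside $Z$ contribute $0$.
\item Edges between $P$ and $Z$ contribute $\frac1{2|P|}$.
\item Edges between $N$ and $Z$ contribute $\frac1{2|N|}$.
\item Edges between $P$ and $N$ contribute $\frac1{2|P|}+\frac1{2|N|}$.
\end{itemize}
Now $\delta(P)$ is exactly the set of $P$--$Z$ edges together with the $P$--$N$ edges, and $\delta(N)$ is the set of $N$--$Z$ edges together with the $P$--$N$ edges. Splitting each $P$--$N$ contribution into its two summands, and assigning one to $\delta(P)$ and the other to $\delta(N)$, gives
\[f(\bm g_{\scriptscriptstyle P,N})=\frac{|\delta(P)|}{2|P|}+\frac{|\delta(N)|}{2|N|}=\frac12\big(\xi(P)+\xi(N)\big).\]

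There is no real obstacle here. The only delicate point is the bookkeeping of the $P$--$N$ edges, which is exactly what the splitting in the last step handles: each such edge lies in both cuts and contributes the sum of the two scaled terms.
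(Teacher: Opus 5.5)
Your proposal is correct and follows the paper's route. You check membership in $\mathcal{G}$ via Lemma~\ref{lemm:+1/2_-1/2}, recover $(P,N)$ from the sign pattern to build the inverse, and compute $f$ by sorting edges into $P$--$N$, $P$--$Z$ and $Z$--$N$ classes, then regrouping into $|\delta(P)|$ and $|\delta(N)|$. You are slightly more careful than the paper in explicitly verifying that $P$ and $N$ are nonempty when constructing the inverse.
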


\begin{proof}
    We prove that the map $\Gamma\rightarrow\mathcal{G}$ defined by the proposition is in fact a bijection by exposing its inverse explicitly. Before that, note that for any $(P,N)\in\Gamma$ the image $\bm g_{\scriptscriptstyle P,N}=(g_v)_{v\in V}$ is in fact in $\mathcal{G}$ because all its positive (and negative) components are equal by definition; and $\bm g_{\scriptscriptstyle P,N}\in\mathcal{F}$ by Lemma~\ref{lemm:+1/2_-1/2} since
    \[\sum_{\substack{v\in V\\g_v\geq0}}g_v=\sum_{v\in P}g_v=\frac{1}{2|P|}|P|=\frac12
    ~~~~~~\text{and}~~~~~
    \sum_{\substack{v\in V\\g_v\leq0}}g_v=\sum_{v\in N}g_v=-\frac{1}{2|N|}|N|=-\frac12.\]
    Now, let $\bm x\in\mathcal{G}$ and define $P=\{v\in V~:~x_v>0\}$ and $N=\{v\in V~:~x_v<0\}$. All coordinates $x_v$ corresponding to $v\in P$ are equal and sum to $\frac12$ because $\bm x$ is in $\mathcal{G}$, so necessarily $x_v=1/(2|P|)$ if $v\in P$. Similarly $x_v=-1/(2|N|)$ if $v\in N$, so $(P,N)$ is the desired pre-image of $\bm x$ under the map. That is $\bm x=\bm g_{\scriptscriptstyle P,N}$, therefore the map is a bijection.

    To see the last part let $(P,N)\in\Gamma$ and set $Z=V\setminus(P\cup N)$ so that $P^c=N\cup Z$ and $N^c=P\cup Z$. Note that only edges with endpoints in different parts contribute to the sum of $f$. Thus 
    \begin{align*}
            f(\bm{g}_{\scriptscriptstyle P,N})=\sum_{uv\in E}|g_u - g_v|
            &=  \sum_{\substack{uv\in E \\ u\in P,v\in N}}|g_u - g_v| + \sum_{\substack{uv\in E \\ u\in P,v\in Z}}|g_u - g_v| + \sum_{\substack{uv\in E \\ u\in Z,v\in N}}|g_u - g_v| \\
             &=  \sum_{\substack{uv\in E \\ u\in P,v\in N}}\left(\frac1{2|P|}+\frac1{2|N|}\right) + \sum_{\substack{uv\in E \\ u\in P,v\in Z}}\frac1{2|P|} + \sum_{\substack{uv\in E \\ u\in Z,v\in N}}\frac1{2|N|} \\
             &= \frac1{2|P|}\Bigg(\sum_{\substack{uv\in E \\ u\in P,v\in N}}1+\sum_{\substack{uv\in E \\ u\in P,v\in Z}}1\Bigg) + \frac1{2|N|}\Bigg(\sum_{\substack{uv\in E \\ u\in P,v\in N}}1+\sum_{\substack{uv\in E \\ u\in Z,v\in N}}1\Bigg) \\
             &= \frac1{2|P|}\sum_{\substack{uv\in E \\ u\in P,v\in P^c}}1 +\frac1{2|N|}\sum_{\substack{uv\in E \\ u\in N,v\in N^c}}1 \\
             &= \frac{|\delta(P)|}{2|P|} + \frac{|\delta(N)|}{2|N|} \\
             &= \frac12\big(\xi(P)+\xi(N)\big).
    \end{align*}
    This completes the proof.
\end{proof}

    From the previous proposition, it follows that the optimization problems of $f(\bm x)=\sum_{uv\in E}|x_u-x_v|$ over vectors of $\bm x\in\mathcal G$ can be reinterpreted as optimization problems of $\frac{1}{2}\big(\xi(P)+\xi(N)\big)$ over ordered quasi-bipartitions $(P,N)\in\Gamma$. This is stated in the next result, of which the statement for $b(G)$ is Theorem 3.2 of Andrade and Dahl~\cite{andrade2024combinatorial}.

\begin{theorem}\label{theo:min_over_oqbp} For any graph $G=(V,E)$:
    \[b(G)=\frac{1}{2} \min_{(P,N)\in\Gamma}\big(\xi(P)+\xi(N)\big)~\text{ and }~B(G)=\frac{1}{2}\max_{(P,N)\in\Gamma}\big(\xi(P)+\xi(N)\big).\]
\end{theorem}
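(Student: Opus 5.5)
The theorem should follow by chaining the two reductions already established. The plan is to transport the optimization problems of Equation~\eqref{eq:opt_problems_F} first to the finite set $\mathcal{G}$, and then to the set $\Gamma$ of ordered quasi-bipartitions, using the bijection of Proposition~\ref{prop:sum_xis}.

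The first step is to invoke Lemma~\ref{lemm:same_pos_same_neg}. For the minimization problem it supplies an $\ell_1$-Fiedler vector $\bm x^\ast\in\mathcal{G}$ with $f(\bm x^\ast)=b(G)$. Since $\mathcal{G}\subset\mathcal{F}$, every $\bm x\in\mathcal{G}$ satisfies $f(\bm x)\geq b(G)$. Together these give $b(G)=\min_{\bm x\in\mathcal{G}}f(\bm x)$, and the same argument with the inequalities reversed gives $B(G)=\max_{\bm x\in\mathcal{G}}f(\bm x)$. This is Equation~\eqref{eq:opt_problems_G}.

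One subtlety needs a remark: the lemma presupposes that the extrema over $\mathcal{F}$ are attained. I would justify this by noting that $\mathcal{F}$ is the intersection of the hyperplane $\sum_v x_v=0$ with the $\ell_1$ unit sphere. It is therefore closed and bounded, hence compact. It is nonempty whenever $|V|\geq 2$, and $f$ is continuous, so both extrema exist.

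The second step is to rewrite the optimization over $\mathcal{G}$ using Proposition~\ref{prop:sum_xis}. Because $(P,N)\mapsto \bm g_{\scriptscriptstyle P,N}$ is a bijection $\Gamma\to\mathcal{G}$, we have
\[
\min_{\bm x\in\mathcal{G}}f(\bm x)=\min_{(P,N)\in\Gamma}f(\bm g_{\scriptscriptstyle P,N})=\frac12\min_{(P,N)\in\Gamma}\big(\xi(P)+\xi(N)\big),
\]
and likewise for the maximum. Here $\xi(P)$ and $\xi(N)$ are well defined, because $P$ and $N$ are nonempty, disjoint, and hence proper subsets of $V$.

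I expect no real obstacle. The only point requiring care is the existence of the extrema over $\mathcal{F}$, which the compactness argument above settles.
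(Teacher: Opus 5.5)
Your proof is correct and follows the paper's route exactly: Lemma~\ref{lemm:same_pos_same_neg} reduces both problems to $\mathcal{G}$ (Equation~\eqref{eq:opt_problems_G}), and the bijection and cut formula of Proposition~\ref{prop:sum_xis} then transport them to $\Gamma$. Your compactness remark, which guarantees that the extrema over $\mathcal{F}$ exist when $|V|\geq 2$, is a useful detail that the paper leaves implicit.
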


    Up to now, there has been a symmetry between the maximum and minimum optimization problems of $f(\bm{x})$; this symmetry ends here. For a connected graph $G$, no $\ell_1$-Fiedler vector of $b(G)$ contains zero values~\cite{andrade2024combinatorial}; however, for any graph with more than two vertices, there are always  $\ell_1$-Fiedler vectors of $B(G)$ that contain zero values. The following result fully characterizes the value of $B(G)$ in terms of the two highest degrees among the vertices of $G$.

\begin{theorem}\label{theo:two_highest_geg}

    Given a graph $G=(V,E)$, let $v_1,v_2\in V$ be two vertices of the highest degree. That is to say, $\text{deg}_G(v_1)\geq \text{deg}_G(v_2)\geq\text{deg}_G(v)$, for all $v\in V$, $v\neq v_1,v_2$. Then
    \begin{equation*}
        B(G) = \frac1{2}\big(\text{deg}_G(v_1)+\text{deg}_G(v_2)\big).
    \end{equation*}
\end{theorem}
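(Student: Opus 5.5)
By Theorem~\ref{theo:min_over_oqbp}, it suffices to show that
\[\max_{(P,N)\in\Gamma}\big(\xi(P)+\xi(N)\big)=\deg_G(v_1)+\deg_G(v_2).\]
We argue in two directions: a lower bound from one explicit quasi-bipartition, and an upper bound valid for every pair in $\Gamma$.

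For the lower bound, take $P=\{v_1\}$ and $N=\{v_2\}$. Both sets are nonempty and disjoint, so $(P,N)\in\Gamma$. For a singleton, $\delta(\{v\})$ consists of all edges at $v$ (the graph is simple), so $\xi(\{v\})=\deg_G(v)$. Hence $\xi(P)+\xi(N)=\deg_G(v_1)+\deg_G(v_2)$. Equivalently, $f(\bm g_{\scriptscriptstyle \{v_1\},\{v_2\}})=\frac12(\deg_G(v_1)+\deg_G(v_2))$ by Proposition~\ref{prop:sum_xis}.

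For the upper bound, fix a nonempty proper $S\subset V$. Every edge of $\delta(S)$ has exactly one endpoint in $S$, so
\[|\delta(S)|\le\sum_{v\in S}\deg_G(v).\]
Therefore $\xi(S)\le \frac1{|S|}\sum_{v\in S}\deg_G(v)$, which is the average degree over $S$, and in particular $\xi(S)\le\max_{v\in S}\deg_G(v)$.

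Now take any $(P,N)\in\Gamma$. Since $P$ and $N$ are disjoint, pick $p\in P$ attaining $\max_{v\in P}\deg_G(v)$ and $n\in N$ attaining $\max_{v\in N}\deg_G(v)$. These are distinct vertices. Then
\[\xi(P)+\xi(N)\le \deg_G(p)+\deg_G(n).\]
It remains to note that any two distinct vertices have degree sum at most $\deg_G(v_1)+\deg_G(v_2)$. This holds because, by hypothesis, $v_1,v_2$ are the top two entries of the degree sequence. To see it, order the pair so that $\deg_G(p)\ge\deg_G(n)$. Then $\deg_G(p)\le\deg_G(v_1)$. Moreover $\deg_G(n)\le\deg_G(v_2)$: either $n\neq v_1$, in which case the hypothesis gives it directly, or $n=v_1$, in which case $p\ne v_1$ and so $\deg_G(n)\le\deg_G(p)\le\deg_G(v_2)$.

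Combining the two bounds and dividing by $2$ gives the theorem. The only delicate point is this last pairing step: the two maxima must come from \emph{distinct} vertices, which is guaranteed by $P\cap N=\varnothing$, and this is exactly why the answer is the average of the two largest degrees rather than $\deg_G(v_1)$ alone. A degenerate case is a graph with fewer than two vertices, where $\Gamma$ is empty; we implicitly assume $n\ge 2$, as the statement does.
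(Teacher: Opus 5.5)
Your proof is correct and follows the paper's combinatorial argument: you reduce via Theorem~\ref{theo:min_over_oqbp}, bound $\xi(S)$ by the largest degree in $S$ (through the slightly sharper average-degree bound), and use $(\{v_1\},\{v_2\})$ for the matching lower bound. Your explicit use of $P\cap N=\varnothing$ to show that the two maxima come from distinct vertices justifies the final inequality, which the paper states without comment.
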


\begin{proof}
    For any subset $S\subset V$ we define $\text{deg}_{max}(S)=\max\{\text{deg}_G(v):v\in S\}$. Then $|\delta(S)|\leq|S|\text{deg}_{max}(S)$. 
    
    Now, let $(S_1,S_2)\in\Gamma$ be any ordered quasi-bipartition of $V$. Then
    \begin{align*}
        \frac12\big(\xi(S_1)+\xi(S_2)\big) = \frac{|\delta (S_1)|}{2|S_1|} + \frac{|\delta (S_2)|}{2|S_2|} 
        &= \frac{|S_2||\delta (S_1)|+|S_1||\delta (S_2)|}{2|S_1||S_2|} \\
        &\leq \frac{|S_1||S_2|\text{deg}_{max}(S_1)+|S_1||S_2|\text{deg}_{max}(S_2)}{2|S_1||S_2|} \\
        &= \frac12\big(\text{deg}_{max}(S_1)+\text{deg}_{max}(S_2)\big) \\
        &\leq \frac12\big(\text{deg}_G(v_1)+\text{deg}_G(v_2)\big).
    \end{align*} 
    Since this holds for any ordered quasi-bipartition, in particular it holds for one attaining the maximum in Theorem~\ref{theo:min_over_oqbp}, thus
    \[B(G)\leq\frac12\big(\text{deg}_G(v_1)+\text{deg}_G(v_2)\big).\]
    
    On the other hand, if we choose $(\{v_1\},\{v_2\})\in\Gamma$, then 
    \[\frac12\big(\xi(\{v_1\})+\xi(\{v_2\})\big) = \frac12\big(|\delta(\{v_1\})|+|\delta(\{v_2\})|\big)=\frac12\big(\text{deg}_G(v_1)+\text{deg}_G(v_2)\big).\]
    Therefore
    \[B(G)\geq\frac12\big(\text{deg}_G(v_1)+\text{deg}_G(v_2)\big),\]
    and the theorem follows.
\end{proof}

    Note that the $\ell_1$-Fiedler vector $\bm x\in\mathcal{G}$ corresponding to the quasi-bipartition $(\{v_1\},\{v_2\})$ is given by    
    \[x_v=\left\{\begin{matrix*}[r] 1/2, & \text{if } v=v_1,~\\ -1/2, & \text{if } v=v_2,~\\0, & \text{otherwise.}\end{matrix*}\right.\]
    Vectors of this form, with exactly one component equal to $1/2$, exactly one equal to $-1/2$, and all other components equal to zero, will play a key role in the geometric interpretation developed in the next section.
    
    The combinatorial characterization obtained in this section admits a natural geometric reinterpretation. In the next section we study the feasible set as a polyhedral set embedded in the ambient space $\mathbf{R}^V$, and place the finite set $\mathcal{G}$ within that structure. This point of view provides a different perspective on the optimization problems defining $b(G)$ and $B(G)$, and also makes it possible to address structural questions about the corresponding $\ell_1$-Fiedler vectors, such as their location inside the feasible set and how many there are.

\section{The geometry of combinatorial Fiedler theory}\label{sec:geometry}

    Let $\mathcal{H}$ be the hyperplane containing the origin and normal to the all-ones vector $\bm 1$ in $\mathbf{R}^V$, that is
    \[\mathcal{H}=\left\{\bm x\in\mathbf{R}^V~:~\bm 1^\top\bm x=0\right\}=\left\{\bm x\in\mathbf{R}^V~:~\sum_{v\in V}x_v=0\right\}.\]
    Let $\mathcal{B}$ be the $\ell_1$ unit ball centered at the origin in $\mathbf{R}^V$, that is 
    \[\mathcal{B}=\left\{\bm x\in\mathbf{R}^V~:~\|\bm x\|_1\leq1\right\}=\left\{\bm x\in\mathbf{R}^V~:~\sum_{v\in V}|x_v|\leq1\right\}.\]
    The set $\mathcal{B}$ is the convex hull of $\{\pm\bm e_v\}_{v\in V}$ where $\bm e_v$ are the standard basis vectors of $\mathbf{R}^V$; it is called the cross polytope. Now, the set  $\bar{\mathcal{F}}=\mathcal{H}\cap\mathcal{B}=\{\bm x\in\mathbf{R}^V~:~{\sum_{v\in V}x_v=0} ~\wedge~ {\sum_{v\in V}|x_v|\leq1}\}$ is not quite our feasible set, but it can be cleanly characterized in terms of vectors $\bm e_v$ in the following way, where we denote the convex hull of a set $\mathcal{X}\subset\mathbf{R}^V$ by $CH(\mathcal{X})$.
    
\begin{lemma}\label{lemm:barF=CH}
    $\bar{\mathcal{F}}=CH\Big(\left\{\frac12(\bm e_u-\bm e_v)~:~ u\neq v\right\}\Big)$.
\end{lemma}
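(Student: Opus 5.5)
We prove the two inclusions separately. Write $\bm r_{uv}=\frac12(\bm e_u-\bm e_v)$ for $u\neq v$.

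\emph{The inclusion $\supseteq$.} Each $\bm r_{uv}$ satisfies $\bm 1^\top\bm r_{uv}=0$ and $\|\bm r_{uv}\|_1=1$, so $\bm r_{uv}\in\mathcal{H}\cap\mathcal{B}=\bar{\mathcal{F}}$. Both $\mathcal{H}$ and $\mathcal{B}$ are convex, hence so is their intersection. Therefore $\bar{\mathcal{F}}$ contains the convex hull of the $\bm r_{uv}$.

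\emph{The inclusion $\subseteq$.} Take $\bm x\in\bar{\mathcal{F}}$ and set $s=\sum_{x_v>0}x_v$. Since $\sum_v x_v=0$, we also have $\sum_{x_v<0}|x_v|=s$, and so $\|\bm x\|_1=2s\le 1$.

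If $s=0$, then $\bm x=\bm 0=\frac12(\bm r_{uv}+\bm r_{vu})$, which lies in the hull. This requires $n\ge2$, which is implicit in the statement.

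If $s>0$, let $P=\{v:x_v>0\}$ and $N=\{v:x_v<0\}$. We use the transportation-type decomposition
\[
\bm x=\sum_{u\in P}\sum_{w\in N}\frac{x_u|x_w|}{s}\,(\bm e_u-\bm e_w)=\sum_{u\in P}\sum_{w\in N}\lambda_{uw}\,\bm r_{uw},\qquad \lambda_{uw}=\frac{2x_u|x_w|}{s}.
\]
To verify it, fix $u\in P$. Its coordinate on the right is $\sum_{w\in N}x_u|x_w|/s=x_u\cdot s/s=x_u$. The computation for coordinates in $N$ is symmetric, and coordinates outside $P\cup N$ are $0$ on both sides.

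The coefficients satisfy $\lambda_{uw}\ge0$ and $\sum_{u,w}\lambda_{uw}=2s^2/s=2s\le1$. To turn this into a genuine convex combination, add the remaining weight $1-2s$ on the point $\bm 0$, which we already showed lies in the hull.

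\emph{Main obstacle.} There is no real difficulty here. The only points that need care are the case $\|\bm x\|_1<1$, handled by writing $\bm 0$ as a convex combination of two opposite generators, and checking the coordinates of the transport decomposition. Alternatively, one could combine Lemma~\ref{lemm:+1/2_-1/2} with scaling, since every nonzero $\bm x\in\bar{\mathcal{F}}$ satisfies $\bm x/(2s)\in\mathcal{F}$.
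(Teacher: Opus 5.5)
Your proof is correct and follows the paper's argument essentially step for step. You use the same convexity argument for $\supseteq$, and for $\subseteq$ the same decomposition $\bm x=\sum_{u\in P,\,w\in N}\frac{2x_u|x_w|}{s}\,\bm r_{uw}$ with total weight $2s\le 1$ and the leftover weight $1-2s$ on $\bm 0$; your explicit check that $\bm 0=\frac12(\bm r_{uv}+\bm r_{vu})$ lies in the hull (for $n\ge 2$) fills in a point the paper only calls clear.
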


\begin{proof}
    Let $u$ and $v$ be any two distinct elements of $V$ and define $\bm e_{uv}=\frac12(\bm e_u-\bm e_v)$. Then
    \[\bm1^\top\bm e_{uv}=\frac12\sum_{w\in V}(\bm e_u-\bm e_v)_w=\frac12(1-1)=0~~\Longrightarrow~~\bm e_{uv}\in\mathcal H,\]
    \[\|\bm e_{uv}\|_1=\frac12\sum_{w\in V}\left|\left(\bm e_u-\bm e_v\right)_w\right|=\frac12(1+1)=1~~\Longrightarrow~~\bm e_{uv}\in\mathcal B.\]
    Hence, each generating vector $\bm e_{uv}=\frac12(\bm e_u-\bm e_v)\in\bar{\mathcal F}$ for each pair $u\neq v$. Then, since $\bar{\mathcal F}$ is itself convex, as it is the intersection of two convex sets, we have
    \[CH\big(\left\{\bm e_{uv}~:~ u\neq v\right\}\big) \subset\bar{\mathcal{F}}.\]
    To prove the other inclusion let $\bm x\in\bar{\mathcal F}$ and define $\bm x^+,\bm x^-\in\mathbf{R}^V$ by
\begin{equation}\label{eq:x^+_x^-}
    x^+_v=\left\{\begin{matrix}x_v, & \text{ if } x_v>0\\0, & \text{ if } x_v\leq0\end{matrix}\right.,~~\text{ and }~~x^-_v=\left\{\begin{matrix}0, & \text{ if } x_v\geq0\\|x_v|, & \text{ if } x_v<0\end{matrix}\right..        
\end{equation}
    With this $\bm x=(\bm x^+-\bm x^-)$, and since $\bm x\in\mathcal{H}$,
    \[0=\sum_{v\in V}x_v=\sum_{v\in V}(x_v^+-x_v^-)=\sum_{v\in V}x_v^+-\sum_{v\in V}x_v^-~~\Longrightarrow~~\sum_{v\in V}x_v^+=\sum_{v\in V}x_v^-=:s\in\mathbf{R}_{\geq0}.\]
    And from $\bm x\in\mathcal{B}$ we have
    \[1\geq\sum_{v\in V}|x_v|=\sum_{v\in V}x_v^++\sum_{v\in V}x_v^-=2s~~\Longrightarrow~~s\leq\frac12.\]
    Note that $s=0$ if and only if $\bm x=\bm 0\in\mathbf{R}^V$, in that case the desired inclusion is immediate. If $s\neq 0$, then we can decompose $\bm x$ as a linear combination of the vectors $\bm e_{uv}$ as follows
    \begin{align*}
        \bm x=\bm x^+-\bm x^-   &= \sum_{u\in V}x_u^+\bm e_u-\sum_{v\in V}x_v^-\bm e_v \\
        &= \sum_{u\in V}\left(\frac1s\sum_{v\in V}x_v^-\right)x_u^+\bm e_u-\sum_{v\in V}\left(\frac1s\sum_{u\in V}x_u^+\right)x_v^-\bm e_v \\
        &= \sum_{u,v\in V}\frac1sx_u^+x_v^-(\bm e_u-\bm e_v)\\
        &= \sum_{u,v\in V}\frac2sx_u^+x_v^-\bm e_{uv} = \sum_{\substack{u,v\in V\\u\neq v}}\frac2sx_u^+x_v^-\bm e_{uv}.
    \end{align*}
    Finally, summing the coefficients
    \[\sum_{\substack{u,v\in V\\u\neq v}}\frac2sx_u^+x_v^-=\frac2s\left(\sum_{u\in V}x_u^+\right)\left(\sum_{v\in V}x_v^-\right)=2s\leq1;\]
    we conclude that $\bm x$ is a convex combination of $\left\{\bm e_{uv}~:~u\neq v\right\}\cup\{\bm 0\}$. Since $\bm 0$ is clearly in both sets we have $\bar{\mathcal F}\subset CH\left(\left\{\bm e_{uv}~:~ u\neq v\right\}\right)$ and we are done.
\end{proof}

    Our feasible set $\mathcal{F}$ is the boundary relative to the hyperplane $\mathcal{H}$ of the full polytope $\bar{\mathcal{F}}$, we write $\mathcal{F}=\partial_\mathcal{H}\bar{\mathcal{F}}$ to denote this. The following subsection gives visual depictions of the relevant sets for low-dimensional cases, along with some specific examples.

\subsection{The geometry of $\mathcal{F}$ in low dimensions}
    Consider the case of a vertex set of only three elements $V=\{1,2,3\}$ so that $\mathbf{R}^V\simeq\mathbf{R}^3$. There, the $\ell_1$ unit ball $\mathcal{B}$ is the octahedron represented in the leftmost panel of Figure~\ref{fig:BFGsetsR3} as an orthogonal projection on
    to the plane $\mathcal{H}$. The intersection $\bar{\mathcal{F}}=\mathcal{B}\cap\mathcal{H}$ is the hexagon that is shown in the center panel of the same Figure~\ref{fig:BFGsetsR3}. The one-dimensional feasible set $\mathcal{F}$ is the boundary of this hexagon. The $\mathbf{R}^V$ coordinates of some of the twelve points of $\mathcal{G}\subset\mathcal{F}$ are shown for reference in the rightmost panel of Figure~\ref{fig:BFGsetsR3}.

\begin{figure}[h]
    \centering
    \includegraphics[scale=1.5, page=1]{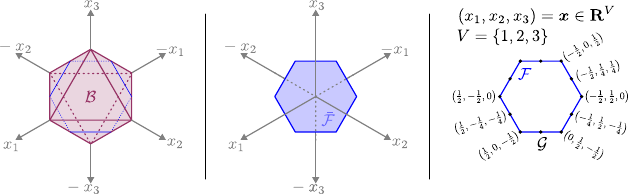}
    \caption{Feasible set $\mathcal{F}\subset\mathbf{R}^V$ and its subset $\mathcal{G}\subset\mathcal{F}$ for $V=\{1,2,3\}$.}
    \label{fig:BFGsetsR3}
\end{figure}
    
    As a first concrete example, Figure~\ref{fig:Ex_Path_3} shows, for the path on three vertices, how the points of $\mathcal{G}$ can be interpreted as one-dimensional drawings of the graph and how the objective function corresponds to the sum of the lengths of all edges in this drawing. 
    
    The leftmost side of Figure~\ref{fig:Ex_Path_3} shows the actual drawings above the line in $\mathbf{R}^1$ for all the points $\bm{g}\in\mathcal{G}$. The coordinates of the $\bm{g}$ vectors are listed in the table together with the corresponding objective function value. A notational convention is adopted for the points of $\mathcal{G}$: the point corresponding to the quasi-bipartition $(P,N)=(\{1,2\},\{3\})$ is denoted by $\bm{g}_{\scriptscriptstyle P,N}=\bm{g}_{\scriptscriptstyle 12,3}$. To the top-right of the figure all points of $\mathcal{G}$ are placed above the hexagon $\mathcal{F}$ in the same positions as in Figure~\ref{fig:BFGsetsR3}. Below that, again in the same order, the value of $f(\bm x)=|x_1-x_2|+|x_2-x_3|$ is depicted schematically as \emph{the distance to the hexagon}; each consecutive gray concentrical hexagon corresponds to an increase of $1/4$ in the $f$ value. This picture already illustrates, in the simplest non-trivial case, how the optimization problems are encoded by the geometry of the feasible set.

\begin{figure}[h]
    \centering
    \includegraphics[scale=1.5, page=2]{Figs.pdf}
    \caption{Worked example for the three vertex path graph $G=(V,E)$ with $V=\{1,2,3\}$ and $E=\{12,23\}$.}
    \label{fig:Ex_Path_3}
\end{figure}

    Consider now the case of a vertex set of four elements $V=\{1,2,3,4\}$ such that $\mathbf{R}^V\simeq\mathbf{R}^4$. In this case, the polytope $\bar{\mathcal{F}}=\mathcal{H}\cap\mathcal{B}$ is a cuboctahedron, and the feasible set $\mathcal{F}=\partial_\mathcal{H}\bar{\mathcal{F}}$ is its boundary relative to the three-dimensional hyperplane $\mathcal{H}$. Figure~\ref{fig:BFGsetsR4} shows this configuration by means of an orthogonal projection of the ambient space $\mathbf{R}^4$ onto $\mathcal{H}$, so that the geometry of $\mathcal{F}$ and the finite set $\mathcal{G}\subset\mathcal{F}$ can be visualized in three dimensions. The visible vertices $\bm e_{uv}=\frac12(\bm e_u-\bm e_v)$ are labeled in the right panel in accordance with the axes shown in the left panel; they correspond to the points $\bm{g}_{\scriptscriptstyle u,v}$ with $u\neq v$ in $\mathcal{G}$, those are the quasi-bipartitions $(\{u\},\{v\})$ for $u\neq v$ under the bijection of Proposition~\ref{prop:sum_xis}. The other visible points of $\mathcal{G}$ are also marked as black dots, they lie at the centers of edges and faces of $\mathcal{F}$.
    
\begin{figure}[h]
    \centering
    \includegraphics[scale=1, page=3]{Figs.pdf}
    \caption{Feasible set $\mathcal{F}\subset\mathbf{R}^V$ and its subset $\mathcal{G}\subset\mathcal{F}$ for $V=\{1,2,3,4\}$.}
    \label{fig:BFGsetsR4}
\end{figure}

    Figure~\ref{fig:Ex_Graph_4.pdf} considers a particular graph as an example of this case of $|V|=4$. A subset of \emph{connected} points of $\mathcal{G}$ are arbitrarily selected and highlighted over the cuboctahedron at the bottom-right of the figure. The corresponding $\mathbf{R}^1$ drawing for those points are shown in the left side of the figure and to their side their corresponding objective value is given. Note that the configuration at the bottom of the pile of drawings $\bm{g}_{\scriptscriptstyle 123,4}=(\frac16,\frac16,\frac16,-\frac12)$, corresponds to a minimizing solution for this particular graph, while the one at the top of the pile $\bm{g}_{\scriptscriptstyle 3,2}=(0,-\frac12,\frac12,0)$ is a maximizing solution. The corresponding quasi-bipartitions for these two points are depicted over the graph at the top-right of the figure.

\begin{figure}[h]
    \centering
    \includegraphics[scale=1.5, page=4]{Figs.pdf}
    \caption{Example for the graph $G=(V,E)$ with $V=\{1,2,3,4\}$ and $E=\{12,13,23,34\}$.}
    \label{fig:Ex_Graph_4.pdf}
\end{figure}

    The placement of the points of $\mathcal{G}$ inside the set $\mathcal{F}$ in the general case, together with a formal description of the \emph{connection} between points of $\mathcal{G}$ and how the value of $f$ changes when moving from point to point is discussed in the next subsection.

\subsection{Geometric interpretation of the optimization problems}

    In the general case, for $|V|=n>4$ the set $\bar{\mathcal{F}}$ is an $(n-1)$-dimensional polytope embedded in the ambient space $\mathbf{R}^V\simeq\mathbf{R}^n$. The feasible set is its shell $\mathcal{F}=\partial_\mathcal{H}\bar{\mathcal{F}}$  composed of $(n-2)$-dimensional faces whose relative interior points have no coordinate equal to zero. Then come $(n-3)$-dimensional faces in whose relative interiors points have exactly one coordinate equal to zero, $(n-4)$-dimensional faces in whose relative interiors points have exactly two coordinates equal to zero, and so on. This ends at the vertices $\bm{e}_{uv}=\bm{g}_{\scriptscriptstyle u,v}=\frac12(\bm e_u-\bm e_v)$ for $u\neq v$ in $V$ ($0$-dimensional faces) for which exactly $n-2$ coordinates are zero.

    Alternative proofs of the results of Section~\ref{sec:intro} can be obtained from this geometric interpretation of the problem. It is convenient to introduce a decomposition of $\mathcal{F}$ into $(n-2)$-dimensional \emph{cells} in whose relative interiors both the weak order between coordinates and the weak signs of each coordinate are preserved. Note that $\mathcal{F}$ (also $\mathcal{G}$) depends only on the vertex set $V$ and not on the edge set $E$ that is considered. The key idea is that, with such a decomposition, the function $f$ of any edge set $E$ is linear in each cell.

    For each bijection $\pi:\{1,\ldots,n\}\rightarrow V$ assigning an order for the coordinates and each number $k$ of non-negative coordinates $k=1,\ldots,n-1$, we define a cell to be the following closed set:
\begin{equation}\label{eq:cell_def}
    \mathcal{C}(\pi,k)=\{\bm x\in\mathcal{F} ~:~ x_{\pi(1)} \geq\cdots\geq x_{\pi(k)}\geq0\geq x_{\pi(k+1)}\geq\cdots\geq x_{\pi(n)}\}.
\end{equation}
    Note that in general $\mathcal{F}$ is decomposed into $(n-1)\cdot n!$ cells; the $12$ cells of the hexagon and some of the $72$ cells of the cuboctahedron are shown in Figure~\ref{fig:Ex_Cells_34}.

\begin{figure}[h]
    \centering
    \includegraphics[scale=1.25, page=5]{Figs.pdf}
    \caption{Cell decomposition of the feasible sets for the low-dimensional $|V|=3$ and $|V|=4$ cases.}
    \label{fig:Ex_Cells_34}
\end{figure}

    The interiors (relative to $\mathcal{F}$) of the cells $\mathcal{C}(\pi,k)$  are pairwise disjoint; the union of all the cells equals $\mathcal{F}$. Two different cells might have common lower-dimensional faces. In particular, each cell $\mathcal{C}(\pi,k)$ has $k(n-k)$ vertices which are obtained precisely when all positive coordinates are equal, all negative coordinates are equal, and the remaining coordinates are zero. This fact follows from the simplex decomposition of the cells given later in Lemma~\ref{lemm:C=C^+_C^-}, we state it here as a lemma for reference.

\begin{lemma}
    The set of all vertices of the cells $\mathcal{C}(\pi,k)$ coincides with $\mathcal{G}$.
\end{lemma}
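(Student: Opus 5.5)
The plan is to describe each cell $\mathcal{C}(\pi,k)$ explicitly as a product-like polytope, read off its vertices, and then take the union over all $(\pi,k)$. Fix $\pi$ and $k$, write $P_j=\{\pi(1),\dots,\pi(j)\}$ for $j=1,\dots,k$ and $N_i=\{\pi(n-i+1),\dots,\pi(n)\}$ for $i=1,\dots,n-k$. By Lemma~\ref{lemm:+1/2_-1/2}, a point $\bm x$ of the cell satisfies $\sum_{v\in P_k}x_v=\tfrac12$ and $\sum_{v\in N_{n-k}}x_v=-\tfrac12$. The ordering constraints decouple: the positive block $(x_{\pi(1)},\dots,x_{\pi(k)})$ ranges over the set $\Delta^+=\{x_{\pi(1)}\ge\cdots\ge x_{\pi(k)}\ge0,\ \sum=\tfrac12\}$, and the negative block ranges independently over the analogous set $\Delta^-$. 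Conversely, every pair in $\Delta^+\times\Delta^-$ gives a point of $\mathcal{F}$ lying in $\mathcal{C}(\pi,k)$. Hence $\mathcal{C}(\pi,k)\cong\Delta^+\times\Delta^-$, which is presumably the content of the forthcoming Lemma~\ref{lemm:C=C^+_C^-}; I would prove this product decomposition directly here.

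Next I identify $\Delta^+$ as a simplex. I change variables to the consecutive gaps $y_j=x_{\pi(j)}-x_{\pi(j+1)}\ge0$ for $j<k$ and $y_k=x_{\pi(k)}\ge0$. Then $x_{\pi(i)}=\sum_{j\ge i}y_j$, and the sum constraint becomes $\sum_j j\,y_j=\tfrac12$. This is a linear bijection onto a $(k-1)$-simplex whose vertices have exactly one nonzero $y_j$, namely $y_j=\tfrac1{2j}$. In the original coordinates that vertex is the vector equal to $\tfrac1{2j}$ on $P_j$ and $0$ elsewhere in the positive block. The same argument applied to $\Delta^-$ gives vertices equal to $-\tfrac1{2i}$ on $N_i$. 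The vertices of a product of polytopes are exactly the pairs of vertices, so the vertices of $\mathcal{C}(\pi,k)$ are precisely the points $\bm g_{P_j,N_i}$ of Proposition~\ref{prop:sum_xis}, with $1\le j\le k$ and $1\le i\le n-k$. There are $k(n-k)$ of them, which also confirms the count stated before the lemma. This already shows that every cell vertex lies in $\mathcal{G}$.

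For the reverse inclusion, take any $(P,N)\in\Gamma$ and choose $\pi$ listing $P$ first, then the zero set $Z=V\setminus(P\cup N)$, then $N$. Set $k=|P|+|Z|$, so that $1\le k\le n-1$ holds because $N\neq\varnothing$ and $P\neq\varnothing$. Then $P=P_{|P|}$ and $N=N_{|N|}$, so $\bm g_{P,N}$ is a vertex of $\mathcal{C}(\pi,k)$ by the previous step.

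The main obstacle is the careful justification of the product decomposition, in particular checking that the weak sign convention at zero coordinates does not break the decoupling. A zero coordinate may sit at the bottom of the positive block or at the top of the negative block, and in either case Lemma~\ref{lemm:+1/2_-1/2} still yields the two separate sum constraints. Once this is settled, the gap change of variables and the vertex characterization of products of simplices are routine.
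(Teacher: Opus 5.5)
Your proposal is correct and follows essentially the paper's own route. The paper derives this lemma from Lemma~\ref{lemm:C=C^+_C^-}: it splits each cell as the direct (affinely isomorphic to a product) sum of the simplices $\mathcal{C}^+(\pi,k)$ and $\mathcal{C}^-(\pi,k)$, uses the same gap variables $\lambda_i=2i(y_{\pi(i)}-y_{\pi(i+1)})$, and reads off the cell vertices as the pairs giving $\bm g_{P_j,N_i}$. Your explicit reverse inclusion (list $P$, then $Z$, then $N$, with $k=|P|+|Z|$) spells out a step that the paper leaves implicit.
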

    
    On each cell $\mathcal{C}(\pi,k)$, the weak order of coordinates is fixed, so for each $uv\in E$ the corresponding difference $x_u-x_v$ is either always non-negative or always non-positive. Hence, the corresponding term $|x_u-x_v|$ of $f$ can be replaced either by $x_u-x_v$ or by $-(x_u-x_v)$, so we have the following result.

\begin{lemma}
    For every graph $G=(V,E)$ and every cell $\mathcal{C}(\pi,k)$, the restriction of $f$ to $\mathcal{C}(\pi,k)$ is linear.
\end{lemma}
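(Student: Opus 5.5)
The plan is to make precise the observation stated just before the lemma: on a fixed cell the order of the coordinates does not change, so every term of $f$ becomes a fixed linear form. Fix a bijection $\pi$ and an index $k$. Since $\pi$ is a bijection, every vertex $u\in V$ is $\pi(i)$ for a unique $i$, so we can write $i=\pi^{-1}(u)$. For each edge $uv\in E$, put $\sigma_{uv}=+1$ if $\pi^{-1}(u)<\pi^{-1}(v)$ and $\sigma_{uv}=-1$ otherwise.

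The key step is to show that $x_u-x_v$ has constant weak sign on the whole cell. Take any $\bm x\in\mathcal{C}(\pi,k)$. The defining chain of inequalities in Equation~\eqref{eq:cell_def}, $x_{\pi(1)}\geq\cdots\geq x_{\pi(k)}\geq0\geq x_{\pi(k+1)}\geq\cdots\geq x_{\pi(n)}$, is a single weak chain; the sign condition at position $k$ only inserts $0$ into it. By transitivity, $\pi^{-1}(u)<\pi^{-1}(v)$ implies $x_u\geq x_v$. Hence $\sigma_{uv}(x_u-x_v)\geq0$ on the cell, and therefore $|x_u-x_v|=\sigma_{uv}(x_u-x_v)$ there.

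Summing over the edges gives, for every $\bm x\in\mathcal{C}(\pi,k)$,
\[f(\bm x)=\sum_{uv\in E}\sigma_{uv}(x_u-x_v)=\bm c^\top\bm x,\qquad c_w=\sum_{\substack{uv\in E\\ w=u}}\sigma_{uv}-\sum_{\substack{uv\in E\\ w=v}}\sigma_{uv}.\]
So $f$ agrees on the cell with the restriction of a linear functional on $\mathbf{R}^V$.

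The only point needing care is that an edge $uv$ is unordered, so we must check that $\sigma_{uv}$ is well defined. Swapping $u$ and $v$ flips both $\sigma_{uv}$ and $x_u-x_v$, so the product $\sigma_{uv}(x_u-x_v)$ is unchanged. No genuine obstacle arises. I would also remark that, because the cell lies in $\mathcal{F}\subset\mathcal{H}$, the vector $\bm c$ is determined only up to adding multiples of $\bm 1$. This does not affect linearity, and it is this linearity that later lets extrema over each cell be attained at its vertices, that is, at points of $\mathcal{G}$.
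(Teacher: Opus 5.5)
Your proof is correct and follows the same route as the paper, which justifies the lemma in the sentence just before it: on each cell the weak order of the coordinates is fixed, so each term $|x_u-x_v|$ equals either $x_u-x_v$ or $-(x_u-x_v)$ throughout the cell. Your sign $\sigma_{uv}$ read off from $\pi^{-1}$, the transitivity argument along the chain with $0$ inserted, and the explicit coefficient vector $\bm c$ make that one-line argument fully precise.
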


    With these last results, we have the following alternative proofs for Lemma~\ref{lemm:same_pos_same_neg} and Theorem~\ref{theo:two_highest_geg}.

\begin{proof} \emph{(of Lemma~\ref{lemm:same_pos_same_neg})}
    Each one of the cells $\mathcal{C}(\pi,k)$ is a convex polytope, as it is defined by a finite set of linear weak inequalities and is bounded because $\mathcal{F}$ is bounded. A linear function attains its extrema over a convex polytope at the vertices of the polytope. The extrema of $f$ over $\mathcal{F}$ are therefore attained at the vertices of the cells $\mathcal{C}(\pi,k)$; these are precisely the points of $\mathcal{G}$. Hence, the optimization problems defining $b(G)$ and $B(G)$ admit solutions in $\mathcal{G}$.
\end{proof}

\begin{proof} \emph{(of Theorem~\ref{theo:two_highest_geg})}
    The function $f(\bm x)=\sum_{uv\in E}|x_u-x_v|$ is convex on $\mathbf{R}^V$, since it is a finite sum of convex functions. As $\bar{\mathcal F}$ is a convex polytope, a maximum of $f$ over $\bar{\mathcal F}$ is attained at a vertex of $\bar{\mathcal F}$. These vertices are precisely the vectors of the form $\frac12(\bm e_u-\bm e_v)$ with $u\neq v$, and they all belong to $\mathcal F$. Therefore,
    \[B(G)=\max_{\bm x\in\mathcal F}f(\bm x)=\max_{u\neq v}f\left(\frac12(\bm e_u-\bm e_v)\right).\]
    For any fixed pair $u\neq v$, each term $|x_a-x_b|$ in $f\left(\frac12(\bm e_u-\bm e_v)\right)$ is equal to $0$ when the edge $ab\in E$ is not incident to $u$ or $v$, is equal to $1/2$ when $ab$ is incident to exactly one of $u$ and $v$, and is equal to $1$ when $ab=uv$. Hence $f\left(\frac12(\bm e_u-\bm e_v)\right)=\frac12(\deg_G(u)+\deg_G(v))$ and then,
    \[B(G)=\max_{u\neq v}\left\{\frac12(\deg_G(u)+\deg_G(v))\right\}=\frac12\big(\deg_G(v_1)+\deg_G(v_2)\big),\]
    where $v_1$ and $v_2$ are two vertices of the highest degree in $G$.
\end{proof}

    The discussion above places the optimization problems that define $b(G)$ and $B(G)$ in a polyhedral setting. The set $\mathcal{G}$ consists of the vertices of the closed cells of $\mathcal{F}$, and the function $f$ is linear on each cell. In the next subsection we examine more closely how the finite set $\mathcal{G}$ sits inside $\mathcal{F}$, with particular attention to the way its points are connected through the surrounding cell structure.
    
\subsection{The structure of $\mathcal{G}$ inside $\mathcal{F}$}

    Recall the notation introduced in Proposition~\ref{prop:sum_xis}; for each ordered quasi-bipartition $(P,N)\in\Gamma$, we denote its corresponding point by $\bm g_{\scriptscriptstyle P,N}\in\mathcal{G}$. As we saw in the last subsection, the set $\mathcal{G}$ coincides with the set of vertices of the cells of $\mathcal{F}$.

    Among the points of $\mathcal{G}$, there are two types that play a distinguished role. On the one hand, there are the points with only two non-zero coordinates, namely the points $\bm g_{\scriptscriptstyle \{u\},\{v\}}\in\mathcal{G}$ for $u\neq v$ in $V$. These are the vertices of the polytope $\bar{\mathcal{F}}$, and the maximum of $f$ is attained at points of this form. We call these points the \textbf{extremes} of $\mathcal{G}$. On the other hand, there are the points of $\mathcal{G}$ with no zero coordinates, that is, the points $\bm g_{\scriptscriptstyle P,N}\in\mathcal{G}$ such that $P\cup N=V$, with $P\cap N=\varnothing$. We call these points \textbf{centers} of $\mathcal{G}$.

    As an example consider the cell $\mathcal{C}(1243,2)$ in Figure~\ref{fig:Ex_Cells_34} above. It has four vertices, but among them just one center $\bm{g}_{\scriptscriptstyle 12,34}=\left(\frac14,\frac14,-\frac14,-\frac14\right)$ and one extreme $\bm{g}_{\scriptscriptstyle 1,3}=\left(\frac12,0,-\frac12,0\right)$. In general, for each cell $\mathcal{C}(\pi,k)$, there is exactly one center and one extreme among its vertices $\bm g_{\scriptscriptstyle P,N}\in\mathcal{G}\cap\mathcal{C}(\pi,k)$. They are the points with $P=\{\pi(1),\dots,\pi(k)\},~N=\{\pi(k+1),\dots,\pi(n)\}$ and $P=\{\pi(1)\},~N=\{\pi(n)\}$, respectively.

    This structure suggests considering not only the set $\mathcal{G}$ itself, but also the neighboring relation induced on it by the cell decomposition of $\mathcal{F}$. We define the \textbf{quasi-bipartition graph} as the graph whose vertex set is $\mathcal{G}$, where two points of $\mathcal{G}$ are adjacent whenever they are joined by an edge of some cell of $\mathcal{F}$.

    In order to understand which elements of $\mathcal{G}$ are adjacent in the quasi-bipartition graph, we leverage the fact that each cell can be decomposed into a \emph{positive part} and a \emph{negative part}, similar to what was done in the proof of Lemma~\ref{lemm:barF=CH}. 

\begin{lemma}\label{lemm:C=C^+_C^-}
    Each cell $\mathcal{C}(\pi,k)$ can be decomposed into the Minkowski sum $\mathcal{C}(\pi,k)=\mathcal{C}^+(\pi,k)+\mathcal{C}^-(\pi,k)$, where $\mathcal{C}^+(\pi,k)=\{\bm{x}^+~:~\bm{x}\in\mathcal{C}(\pi,k)\}$ and $\mathcal{C}^-(\pi,k)=\{-\bm{x}^-~:~\bm{x}\in\mathcal{C}(\pi,k)\}$, with $\bm{x}^+$ and $\bm{x}^-$ defined as in Equation~\eqref{eq:x^+_x^-}:
    \[x^+_v=\left\{\begin{matrix}x_v, & \text{ if } x_v>0\\0, & \text{ if } x_v\leq0\end{matrix}\right.,~~\text{ and }~~x^-_v=\left\{\begin{matrix}0, & \text{ if } x_v\geq0\\|x_v|, & \text{ if } x_v<0\end{matrix}\right..\]
    Furthermore, these parts $\mathcal{C}^+(\pi,k)$ and $\mathcal{C}^-(\pi,k)$ are, respectively, a $(k-1)$-simplex and a $(n-k-1)$-simplex.
\end{lemma}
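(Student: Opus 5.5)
The plan is to describe the two sets $\mathcal{C}^+(\pi,k)$ and $\mathcal{C}^-(\pi,k)$ explicitly and then check the Minkowski sum identity by proving two inclusions. Write $P_0=\{\pi(1),\dots,\pi(k)\}$ and $N_0=\{\pi(k+1),\dots,\pi(n)\}$. By Lemma~\ref{lemm:+1/2_-1/2}, a point $\bm x\in\mathcal{C}(\pi,k)$ has $x_v\geq0$ on $P_0$ and $x_v\leq 0$ on $N_0$. Its positive coordinates sum to $\frac12$, and all of them lie in $P_0$. Hence $\bm x^+$ is exactly the restriction of $\bm x$ to $P_0$ (zero elsewhere), and $-\bm x^-$ is the restriction to $N_0$. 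I therefore expect
\[\mathcal{C}^+(\pi,k)=\Big\{\bm y\in\mathbf{R}^V:~ y_{\pi(1)}\geq\cdots\geq y_{\pi(k)}\geq0,~ y_v=0 \text{ for } v\in N_0,~ \textstyle\sum_v y_v=\frac12\Big\},\]
and symmetrically for $\mathcal{C}^-(\pi,k)$, with $0\geq y_{\pi(k+1)}\geq\cdots\geq y_{\pi(n)}$ and sum $-\frac12$. The inclusion ``$\subseteq$'' of each described set into the projection is immediate. For the converse, I will take any $\bm y$ in the described positive set together with any fixed $\bm z$ in the described negative set (for instance the vector constant $-\frac1{2(n-k)}$ on $N_0$). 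Then $\bm y+\bm z\in\mathcal{C}(\pi,k)$ and $(\bm y+\bm z)^+=\bm y$.

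The Minkowski sum identity follows directly. Every $\bm x\in\mathcal{C}(\pi,k)$ satisfies $\bm x=\bm x^+-\bm x^-$, which gives $\mathcal{C}\subseteq\mathcal{C}^++\mathcal{C}^-$. Conversely, take $\bm y\in\mathcal{C}^+$ and $\bm z\in\mathcal{C}^-$. They have disjoint supports $P_0$ and $N_0$, so $\bm x=\bm y+\bm z$ satisfies the full chain of inequalities in \eqref{eq:cell_def}. Its positive part sums to $\frac12$ and its negative part sums to $-\frac12$, so $\bm x\in\mathcal{F}$ by Lemma~\ref{lemm:+1/2_-1/2}. One subtlety here is zero coordinates: points of $P_0$ may have value $0$, but such values contribute nothing to either sum, so the characterization still applies.

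The main step is showing that the set $\{y_{\pi(1)}\geq\cdots\geq y_{\pi(k)}\geq0,~\sum y=\frac12\}$ is a $(k-1)$-simplex. I will use the standard change of variables to successive differences: set $t_j=y_{\pi(j)}-y_{\pi(j+1)}\geq0$ for $j<k$ and $t_k=y_{\pi(k)}\geq0$. Then $y_{\pi(i)}=\sum_{j\geq i}t_j$, and the sum constraint becomes $\sum_{j=1}^k j\,t_j=\frac12$. This describes the standard simplex up to a linear bijection, with vertices $t=\frac1{2j}\bm e_j$. These vertices correspond to $\bm y=\frac1{2j}\big(\bm e_{\pi(1)}+\cdots+\bm e_{\pi(j)}\big)$ for $j=1,\dots,k$. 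The $k$ vertices are affinely independent because the map $t\mapsto\bm y$ is injective and linear. The same argument applied to $N_0$, with the order reversed, gives an $(n-k-1)$-simplex with vertices $-\frac1{2j}\big(\bm e_{\pi(n)}+\cdots+\bm e_{\pi(n-j+1)}\big)$.

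As a consistency check, which is also what the earlier vertex lemma needs, sums of one vertex from each simplex give exactly the $k(n-k)$ points $\bm g_{P,N}$ with $P$ a prefix of $\pi$ and $N$ a suffix. I expect no real obstacle. The only care needed is the bookkeeping showing that the projections coincide with the explicit descriptions, in particular the surjectivity argument given above.
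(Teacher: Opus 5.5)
Your proposal is correct and follows essentially the paper's route: your successive-difference variables $t_j$ are the paper's barycentric coefficients up to scaling ($\lambda_j=2j\,t_j$), and they yield the same vertices $\bm p_j$ and $\bm q_j$. You are slightly more careful than the paper, since you explicitly verify two steps it leaves implicit: the reverse Minkowski inclusion, and the fact that every ordered vector summing to $\frac12$ is realized as some $\bm x^+$ (by padding with a fixed negative part).
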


\begin{proof}
    The set equality $\mathcal{C}(\pi,k)=\mathcal{C}^+(\pi,k)+\mathcal{C}^-(\pi,k)$ its true, since each $\bm{x}\in\mathcal{C}(\pi,k)$ can be rewritten as $\bm x=\bm x^+-\bm x^-$. To prove the second part, we present the affine independent vertices of the simplices explicitly:
    \begin{equation}\label{eq:C+=CH}
        \mathcal{C}^+(\pi,k)=CH\big(\{\bm{p}_1,\ldots,\bm{p}_k\}\big),~~\text{where}~~\bm{p}_i=\frac1{2i}\sum_{j=1}^i\bm{e}_{\pi(j)},
    \end{equation}
    and
    \begin{equation}\label{eq:C-=CH}
        \mathcal{C}^-(\pi,k)=CH\big(\{\bm{q}_{1},\ldots,\bm{q}_{n-k}\}\big),~~\text{where}~~\bm{q}_i=-\frac1{2i}\sum_{j=1}^i\bm{e}_{\pi(n-j+1)}.    
    \end{equation}
   We prove Equation~\eqref{eq:C+=CH}, the argument for  Equation~\eqref{eq:C-=CH} is similar. Let $\bm y\in\mathcal C^+(\pi,k)$ so that the sum  $\sum_{j=1}^ky_{\pi(j)}=1/2$, and $y_{\pi(k+1)}=0$. We have\footnote{The algebraic trick, used again latter, is $\displaystyle\sum_{j=1}^{k}\sum_{i=j}^{k}a_{i,j}=\sum_{1\leq j\leq i\leq k}a_{i,j}=\sum_{i=1}^{k}\sum_{j=1}^{i}a_{i,j}$, see Equation (2.32) in~\cite{graham1994concrete}.}
    \begin{align*}
        \bm y
            &=\sum_{j=1}^k y_{\pi(j)}\bm e_{\pi(j)}
            =\sum_{j=1}^k\left(\sum_{i=j}^k\big(y_{\pi(i)}-y_{\pi(i+1)}\big)\right)\bm e_{\pi(j)}\\
            &=\sum_{i=1}^k\big(y_{\pi(i)}-y_{\pi(i+1)}\big)\sum_{j=1}^i \bm e_{\pi(j)}
            =\sum_{i=1}^k 2i\big(y_{\pi(i)}-y_{\pi(i+1)}\big)\frac1{2i}\sum_{j=1}^i \bm
            e_{\pi(j)}
            =\sum_{i=1}^k \lambda_i \bm p_i,
    \end{align*}

    where $\lambda_i:=2i\big(y_{\pi(i)}-y_{\pi(i+1)}\big)$ for $i=1,\ldots,k$. Moreover,
    \[\sum_{i=1}^k \lambda_i=\sum_{i=1}^k 2i(y_{\pi(i)}-y_{\pi(i+1)})=2\sum_{i=1}^k y_{\pi(i)}=1,\]
    so $\mathcal{C}^+(\pi,k)\subset CH\big(\{\bm{p}_1,\ldots,\bm{p}_k\}\big)$.

    To see the other inclusion, let $\bm y\in CH\big(\{\bm{p}_1,\ldots,\bm{p}_k\}\big)$. That is to say
    \[\bm y=\sum_{i=1}^k \lambda_i \bm p_i ~~\text{with}~~\lambda_i\geq 0 ~~\text{and}~~ \sum_{i=1}^k \lambda_i=1.\]
    Now from
    \[\bm y=\sum_{i=1}^k \lambda_i \bm p_i = \sum_{i=1}^k\frac{\lambda_i}{2i}\sum_{j=1}^i\bm{e}_{\pi(j)} = \sum_{i=1}^k\sum_{j=1}^i\frac{\lambda_i}{2i}\bm{e}_{\pi(j)} = \sum_{j=1}^k\left(\sum_{i=j}^k\frac{\lambda_i}{2i}\right)\bm{e}_{\pi(j)},\]
    we see that for each $j=1,\ldots,k$ the corresponding $\bm y$ component is 
    $y_{\pi(j)} = \sum_{i=j}^k\lambda_i/2i$, so 
    \[y_{\pi(1)}\geq y_{\pi(2)}\geq\cdots\geq y_{\pi(k)}\geq 0.\]
    Finally,
    \[\sum_{j=1}^k y_{\pi(j)} = \sum_{j=1}^k \sum_{i=j}^k \frac{\lambda_i}{2i} = \sum_{i=1}^k\sum_{j=1}^i \frac{\lambda_i}{2i} = \sum_{i=1}^k \frac{\lambda_i}{2} = \frac12.\]
    Therefore, $\bm y\in\mathcal C^+(\pi,k)$, and thus $CH\big(\{\bm p_1,\ldots,\bm p_k\}\big)\subset\mathcal C^+(\pi,k)$ completing the proof.
    \end{proof}

    For a concrete example consider $V=\{1,2,3,4,5\}$, and take the cell $\mathcal{C}(13245,2)$ of the feasible set $\mathcal{F}$. Then 
    \[\mathcal{C}^+(13245,2)=\textstyle CH(\{(\frac12,0,0,0,0),~(\frac14,0,\frac14,0,0)\}),\]
    \[\mathcal{C}^-(13245,2)=\textstyle CH(\{(0,0,0,0,-\frac12),~(0,0,0,-\frac14,-\frac14),~(0,-\frac16,0,-\frac16,-\frac16)\}).\]

    We refer to Fukuda~\cite{Fukuda2004} for a general treatment of polytopes constructed as Minkowski sums. In the present situation, however, the geometry is particularly simple: a coordinate that may be nonzero on $\mathcal C^+(\pi,k)$ is identically zero on $\mathcal C^-(\pi,k)$, and conversely.  Hence every point of $\mathcal C(\pi,k)$ has a unique representation as $\bm{x}^++\bm{x}^-$ with $\bm{x}^+\in\mathcal C^+(\pi,k)$ and $\bm{x}^-\in\mathcal C^-(\pi,k)$. Equivalently, the addition map $(\bm{x}^+,\bm{x}^-)\mapsto \bm{x}^++\bm{x}^-$ is an affine isomorphism from $\mathcal C^+(\pi,k)\times\mathcal C^-(\pi,k)$ to $\mathcal C(\pi,k)$. Consequently, every edge of $\mathcal C(\pi,k)$ is obtained by fixing one factor at a vertex and taking an edge of the other. % \cite{pfeifle2012polytopality}. % SEE SECTION 2 THERE 
    Translating this description into the language of ordered quasi-bipartitions gives the desired characterization of adjacency in the quasi-bipartition graph. 
    
    In terms of ordered quasi-bipartitions, the vertices of $\mathcal{C}^+(\pi,k)$ correspond to the nested positive sets
    \[\{\pi(1)\}\varsubsetneq \{\pi(1),\pi(2)\}\varsubsetneq\cdots\varsubsetneq\{\pi(1),\ldots,\pi(k)\},\]
    while the vertices of $\mathcal{C}^-(\pi,k)$ correspond to the nested negative sets
    \[\{\pi(n)\}\varsubsetneq\{\pi(n-1),\pi(n)\}\varsubsetneq\cdots\varsubsetneq\{\pi(k+1),\ldots,\pi(n)\}.\]
    Therefore, moving along an edge of $\mathcal{C}(\pi,k)$ means keeping one of the two sets fixed and replacing the other by another member of one of these nested chains.

\begin{proposition}\label{prop:adjacency_quasibipartitions}
    Two distinct points of $\mathcal{G}$ are adjacent in the quasi-bipartition graph if and only if one of the two sets is the same for their corresponding ordered quasi-bipartitions, while the other two are comparable by inclusion. Formally,
    \[\bm g_{\scriptscriptstyle P,N}\sim\bm g_{\scriptscriptstyle P',N'}~~\Longleftrightarrow~~
    \Big((P=P')\wedge(N\varsubsetneq N'\vee N\varsupsetneq N')\Big)\vee\Big((N=N')\wedge(P\varsubsetneq P'\vee P\varsupsetneq P')\Big).\]
\end{proposition}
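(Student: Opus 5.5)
The plan is to prove both directions using the Minkowski decomposition of Lemma~\ref{lemm:C=C^+_C^-}. Recall that $\mathcal C(\pi,k)$ is affinely isomorphic to the product of the simplices $\mathcal C^+(\pi,k)\times\mathcal C^-(\pi,k)$. The edges of a product of two polytopes are exactly the pairs (vertex of one factor) $\times$ (edge of the other factor). In a simplex, every pair of distinct vertices spans an edge. Hence the edges of $\mathcal C(\pi,k)$ are exactly the segments joining $\bm p_i+\bm q_j$ and $\bm p_{i'}+\bm q_{j'}$ with either $i=i'$, $j\neq j'$, or $j=j'$, $i\neq i'$.

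Under the bijection of Proposition~\ref{prop:sum_xis}, the vertex $\bm p_i+\bm q_j$ is $\bm g_{\scriptscriptstyle P_i,N_j}$. Here $P_i=\{\pi(1),\dots,\pi(i)\}$ and $N_j=\{\pi(n-j+1),\dots,\pi(n)\}$. Indeed, the positive entries all equal $1/(2i)$ and the negative entries all equal $-1/(2j)$.

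\emph{($\Rightarrow$).} Suppose the two points are joined by an edge of some cell $\mathcal C(\pi,k)$. By the description above, one of the two sets is shared. The other two sets are distinct members of the chain $P_1\subsetneq\cdots\subsetneq P_k$ (or of $N_1\subsetneq\cdots\subsetneq N_{n-k}$), so they are comparable by strict inclusion.

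\emph{($\Leftarrow$).} By the symmetry $\bm x\mapsto-\bm x$, which swaps the roles of $P$ and $N$ and maps cells to cells, it suffices to treat the case $N=N'$ and $P\subsetneq P'$. Here $P'\cap N=\varnothing$ because $(P',N')\in\Gamma$. The task is to construct a permutation $\pi$ and an integer $k$ that realise both points as vertices of $\mathcal C(\pi,k)$:
\begin{itemize}
\item list the elements of $P$ first, in any order;
\item then list the elements of $P'\setminus P$;
\item then list the elements of $V\setminus(P'\cup N)$;
\item finally list the elements of $N$, placed in the last $|N|$ positions;
\item take $k=n-|N|$.
\end{itemize}
With this choice, $P=P_{|P|}$ and $P'=P_{|P'|}$ are both in the positive chain, since $|P|<|P'|\le k$. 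Also $N=N_{|N|}$ lies in the negative chain, since $|N|=n-k\ge1$. We also need $k\ge1$, which holds because $P\neq\varnothing$. By the edge description, $\bm g_{\scriptscriptstyle P,N}$ and $\bm g_{\scriptscriptstyle P',N}$ are then joined by an edge of $\mathcal C(\pi,k)$.

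The main obstacle is justifying rigorously the claim about edges of the Minkowski sum. The paper asserts that addition is an affine isomorphism from the product, because $\mathcal C^+$ and $\mathcal C^-$ are supported on disjoint coordinate sets. I would cite this fact, together with the standard fact that the faces of a product are products of faces. That settles both that these segments are edges and that no other edges exist, so no direct verification via supporting hyperplanes is needed.
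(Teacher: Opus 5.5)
Your proposal is correct and follows the same route as the paper. The forward direction reads off the edges from the product-of-simplices structure $\mathcal C(\pi,k)\cong\mathcal C^+(\pi,k)\times\mathcal C^-(\pi,k)$, whose vertices are $\bm p_i+\bm q_j=\bm g_{\scriptscriptstyle P_i,N_j}$; the converse chooses an order $\pi$ that puts both points in a common cell where they differ in exactly one simplex factor. Your explicit choice of $\pi$ and $k=n-|N|$ (with the check $|P|<|P'|\le n-|N|$) and the reduction via $\bm x\mapsto-\bm x$ fill in details the paper's proof leaves implicit.
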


\begin{proof} 
    The forward implication follows immediately from the above discussion. Conversely, if two ordered quasi-bipartitions have one set equal and the other two comparable by inclusion, one can choose an order $\pi$ such that the corresponding points lie in a common cell and differ in exactly one simplex factor. Therefore, they are adjacent. 
\end{proof}

    The quasi-bipartition graph provides a convenient discrete model for the behavior of $f$ on $\mathcal{F}$. It is useful because it retains the adjacency structure of the cell decomposition on the finite set $\mathcal{G}$. In this way, questions about the local behavior of $f$ on $\mathcal{F}$ can be reduced to questions about how $f$ varies along the edges of this graph. Since $f$ is affine on each cell, this leads naturally to studying its directional differences along those edges, which turn out to have a clean combinatorial characterization.

    If $\bm g\sim\bm g'$, we denote by $\Delta(\bm g\to\bm g')f$ the directional difference of $f$ along the oriented edge from $\bm g$ to $\bm g'$. Then, depending on which of the two parts changes, we have:
    \begin{equation}\label{eq:Delta(gg)f}\begin{aligned}
        \Delta(\bm g_{\scriptscriptstyle P,N}\to\bm g_{\scriptscriptstyle P',N})f &= \frac12\big(\xi(P')-\xi(P)\big), \\
        \Delta(\bm g_{\scriptscriptstyle P,N}\to\bm g_{\scriptscriptstyle P,N'})f &= \frac12\big(\xi(N')-\xi(N)\big). 
    \end{aligned}\end{equation}
        
    Now, in particular, let $\bm g_{\scriptscriptstyle P,N}\in\mathcal{G}$ be a non-center, and write $Z=V\setminus(P\cup N)$. Its two \textbf{centralizing directions} are the oriented edges $\bm g_{\scriptscriptstyle P,N}\to \bm g_{\scriptscriptstyle P\cup Z,N}$ and $\bm g_{\scriptscriptstyle P,N}\to \bm g_{\scriptscriptstyle P,N\cup Z}$.

\begin{lemma}\label{lemm:centralizing_direction_negative}
    Let $G$ be connected and let $\bm g_{\scriptscriptstyle P,N}\in\mathcal{G}$ be a non-center. Then at least one of the two centralizing directions has negative difference:
    \[\Delta(\bm g_{\scriptscriptstyle P,N}\to \bm g_{\scriptscriptstyle P\cup Z,N})f<0 ~~\text{or}~~ \Delta(\bm g_{\scriptscriptstyle P,N}\to \bm g_{\scriptscriptstyle P,N\cup Z})f<0.\]
\end{lemma}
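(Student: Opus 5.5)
The plan is to translate both centralizing differences into cut quantities via Equation~\eqref{eq:Delta(gg)f}, and then show that they cannot both be non-negative unless $P$ is disconnected from the rest of $G$.

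\textbf{Setup.} Since $\bm g_{\scriptscriptstyle P,N}$ is a non-center, $Z=V\setminus(P\cup N)\neq\varnothing$. Write $p=|P|$, $n=|N|$ and $z=|Z|$, all positive. Let $a$, $b$, $c$ be the numbers of edges between $P$ and $N$, between $P$ and $Z$, and between $N$ and $Z$, respectively. Counting the edges that leave each set gives
\[|\delta(P)|=a+b,\quad |\delta(N)|=a+c,\quad |\delta(P\cup Z)|=a+c,\quad |\delta(N\cup Z)|=a+b.\]
By Equation~\eqref{eq:Delta(gg)f}, the two centralizing differences are $\frac12\big(\frac{a+c}{p+z}-\frac{a+b}{p}\big)$ and $\frac12\big(\frac{a+b}{n+z}-\frac{a+c}{n}\big)$.

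\textbf{Contradiction argument.} Suppose, for contradiction, that both differences are $\geq 0$. Clearing denominators in the first inequality gives $p(a+c)\geq (p+z)(a+b)$, that is,
\[c\geq b+\frac{z(a+b)}{p}.\]
Symmetrically, the second inequality gives
\[b\geq c+\frac{z(a+c)}{n}.\]
Adding the two and cancelling $b+c$ yields $0\geq z\big(\frac{a+b}{p}+\frac{a+c}{n}\big)$. Every term is non-negative and $z>0$, so $a+b=a+c=0$, and hence $a=b=c=0$. In particular $|\delta(P)|=0$. But $P$ is a non-empty proper subset of $V$, since $N\neq\varnothing$, so this contradicts the connectivity of $G$. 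Therefore at least one centralizing difference is strictly negative.

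\textbf{Main obstacle.} The only delicate point is the bookkeeping of the cut identities, in particular that $\delta(P\cup Z)$ consists exactly of the edges incident to $N$ leaving $N$, so that it equals $\delta(N)$. Once these identities are in place, the argument is just the two linear inequalities above.
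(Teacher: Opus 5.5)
Your proof is correct and is essentially the paper's argument. Both use $\delta(P\cup Z)=\delta(N)$ and $\delta(N\cup Z)=\delta(P)$ to reduce to the inequalities $|\delta(N)|-|\delta(P)|\geq |Z|\xi(P)$ and $|\delta(P)|-|\delta(N)|\geq |Z|\xi(N)$; the only cosmetic difference is that you add them to force $|\delta(P)|=0$, whereas the paper first uses connectivity to make both right-hand sides strictly positive.
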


\begin{proof}
    Suppose, for the sake of contradiction, that both directions are nonnegative,
    \[\Delta(\bm g_{\scriptscriptstyle P,N}\to \bm g_{\scriptscriptstyle P\cup Z,N})f\geq0 ~~\text{and}~~ \Delta(\bm g_{\scriptscriptstyle P,N}\to \bm g_{\scriptscriptstyle P,N\cup Z})f\geq0.\]
    By Equation~\eqref{eq:Delta(gg)f} this is equivalent to
    \[\frac12\big(\xi(P\cup Z)-\xi(P)\big)\geq0 ~~\text{and}~~ \frac12\big(\xi(N\cup Z)-\xi(N)\big)\geq0.\]
    Recalling the definition $\xi(S)=|\delta(S)|/|S|$, we have
    \[\frac{|\delta(P\cup Z)|}{|P|+|Z|}-\frac{|\delta(P)|}{|P|}\geq0 ~~\text{and}~~ \frac{|\delta(N\cup Z)|}{|N|+|Z|}-\frac{|\delta(N)|}{|N|}\geq0.\]
    Since $\delta(S)=\delta(S^c)$, this becomes
    \[\frac{|\delta(N)|}{|P|+|Z|}-\frac{|\delta(P)|}{|P|}\geq0 ~~\text{and}~~ \frac{|\delta(P)|}{|N|+|Z|}-\frac{|\delta(N)|}{|N|}\geq0.\]
    Rearranging, we get
    \[|\delta(N)|-|\delta(P)|\geq\frac{|Z|}{|P|}|\delta(P)| ~~\text{and}~~ |\delta(P)|-|\delta(N)|\geq\frac{|Z|}{|N|}|\delta(N)|,\]
    hence
    \[|\delta(N)|-|\delta(P)|\geq|Z|\xi(P) ~~\text{and}~~ |\delta(P)|-|\delta(N)|\geq|Z|\xi(N).\]
    Since $\bm g_{\scriptscriptstyle P,N}\in\mathcal{G}$ is not a center, we have $Z\neq\varnothing$. Since $G$ is connected and both $P$ and $N$ are nonempty proper subsets of $V$, we also have $\xi(P),\xi(N)>0$. Therefore
    \[|\delta(N)|-|\delta(P)|>0 ~~\text{and}~~ |\delta(P)|-|\delta(N)|>0,\]
    which is impossible.
\end{proof}

\begin{corollary}\label{cor:minimum_on_G_at_centers}
    If $G$ is connected, every minimum point of $f$ on $\mathcal{G}$ is a center.
\end{corollary}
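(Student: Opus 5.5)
The plan is to argue by contradiction, using Lemma~\ref{lemm:centralizing_direction_negative} together with the explicit edge differences of Equation~\eqref{eq:Delta(gg)f}. Let $\bm g_{\scriptscriptstyle P,N}\in\mathcal{G}$ be a point at which $f$ attains its minimum over $\mathcal{G}$. Suppose it is not a center, and write $Z=V\setminus(P\cup N)\neq\varnothing$.

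Since $G$ is connected, Lemma~\ref{lemm:centralizing_direction_negative} applies. At least one of the centralizing directions, $\bm g_{\scriptscriptstyle P,N}\to\bm g_{\scriptscriptstyle P\cup Z,N}$ or $\bm g_{\scriptscriptstyle P,N}\to\bm g_{\scriptscriptstyle P,N\cup Z}$, has strictly negative difference. Both endpoints of these directions are genuine points of $\mathcal{G}$: the pairs $(P\cup Z,N)$ and $(P,N\cup Z)$ still consist of nonempty disjoint sets, so they belong to $\Gamma$. By Proposition~\ref{prop:adjacency_quasibipartitions} they are adjacent to $\bm g_{\scriptscriptstyle P,N}$, since one part is kept fixed and the other strictly enlarged.

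By Equation~\eqref{eq:Delta(gg)f}, the difference along an edge is exactly the change of $f$ between its endpoints, because of the formula $f(\bm g_{\scriptscriptstyle P,N})=\frac12(\xi(P)+\xi(N))$ of Proposition~\ref{prop:sum_xis}. For instance, $f(\bm g_{\scriptscriptstyle P\cup Z,N})-f(\bm g_{\scriptscriptstyle P,N})=\frac12(\xi(P\cup Z)-\xi(P))$. Hence one of these neighbors has a strictly smaller $f$-value than $\bm g_{\scriptscriptstyle P,N}$. This contradicts minimality over $\mathcal{G}$, so every minimizer is a center.

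There is no real obstacle here. The only point needing care is that the directional difference in Equation~\eqref{eq:Delta(gg)f} is the actual difference of function values at the two endpoints, not merely a slope. That follows immediately from Proposition~\ref{prop:sum_xis}, so the corollary is a direct consequence of the lemma.
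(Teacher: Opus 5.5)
Your proof is correct and is essentially the paper's argument: the paper states the corollary as an immediate consequence of Lemma~\ref{lemm:centralizing_direction_negative}, because a non-center always has a point $\bm g_{\scriptscriptstyle P\cup Z,N}$ or $\bm g_{\scriptscriptstyle P,N\cup Z}$ of $\mathcal{G}$ with strictly smaller $f$-value. Your observation that the edge difference is an actual difference of $f$-values, via Proposition~\ref{prop:sum_xis}, is exactly what makes this immediate; adjacency in the quasi-bipartition graph is not actually needed for the conclusion.
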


This last result restates, in the present geometric setting, Corollary~3.3 of Andrade and Dahl~\cite{andrade2024combinatorial}, which states that, in the context of connected simple graphs $G$, the $\ell_1$-Fiedler vectors of $b(G)$ contain no zero coordinates. 

A direct consequence of this fact, which will be used in the next section to count the number of $\ell_1$-Fiedler vectors for $b(G)$, is the following:

\begin{theorem}[Andrade and Dahl~\cite{andrade2024combinatorial}]\label{thm:andrade}
For any graph $G=(V,E)$,
\[b(G)=\frac{|V|}{2} \min_S \frac{|\delta{S}|}{|S||V \backslash S|},\]
where the minimum is taken for nonempty subsets $S$ of $V$ such that $S\neq V$ and both $S$ and its complement induce connected subgraphs of $G$.
\end{theorem}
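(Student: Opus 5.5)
The plan is to reduce everything to Theorem~\ref{theo:min_over_oqbp} and Corollary~\ref{cor:minimum_on_G_at_centers}, and then add a purely combinatorial connectivity argument. Throughout, assume first that $G$ is connected, and write $n=|V|$ and $\varphi(S)=|\delta(S)|/(|S||V\setminus S|)$.

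\emph{Step 1: reduce to bipartitions.} By Corollary~\ref{cor:minimum_on_G_at_centers}, every minimizer of $f$ on $\mathcal{G}$ is a center $\bm g_{\scriptscriptstyle P,N}$, so $N=V\setminus P$. For such a pair we have $\delta(N)=\delta(P)$, and therefore
\[
\tfrac12\big(\xi(P)+\xi(N)\big)=\tfrac12|\delta(P)|\Big(\tfrac1{|P|}+\tfrac1{n-|P|}\Big)=\tfrac n2\,\varphi(P).
\]
By Theorem~\ref{theo:min_over_oqbp}, this gives $b(G)=\frac n2\min_S\varphi(S)$, where the minimum runs over all nonempty proper subsets $S$. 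It remains to show that this minimum is attained by some $S$ such that both $G[S]$ and $G[V\setminus S]$ are connected. It suffices to show that \emph{every} minimizer $S$ has this property.

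\emph{Step 2: connectivity of minimizers (the main step).} Let $\varphi^*$ be the minimum value. Since $G$ is connected, $\varphi^*>0$. Suppose a minimizer $S$ induces a disconnected subgraph, with components $S_1,\dots,S_r$, $r\ge2$, and sizes $s_i=|S_i|$. There are no edges between distinct $S_i$, so $\delta(S)$ is the disjoint union of the cuts $\delta(S_i)$, and $|\delta(S)|=\sum_i|\delta(S_i)|$. Each $S_i$ is itself a nonempty proper subset, so $|\delta(S_i)|\ge\varphi^* s_i(n-s_i)$. Summing over $i$ gives
\[
|\delta(S)|\ge\varphi^*\Big(n|S|-\sum_i s_i^2\Big)>\varphi^*\big(n|S|-|S|^2\big)=\varphi^*|S|(n-|S|).
\]
The strict inequality uses $\sum_i s_i^2<(\sum_i s_i)^2$ for $r\ge2$ together with $\varphi^*>0$. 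This contradicts $\varphi(S)=\varphi^*$. Since $\varphi(S)=\varphi(V\setminus S)$, applying the same argument to $V\setminus S$ shows that the complement is connected as well. Hence the minimum over all $S$ equals the minimum over the restricted family, which is the claimed formula. I expect the main subtlety to be exactly this superadditivity step. Comparing the ratio of $S$ with the ratio of a single component directly fails because the denominators $|S||V\setminus S|$ are not additive. The averaging trick above, which bounds each component by $\varphi^*$ and then sums, avoids having to pick a specific component.

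\emph{Step 3: the disconnected case.} If $G$ is disconnected, then $b(G)=0$, since taking $P$ to be a component and $N$ its complement gives $\xi(P)=\xi(N)=0$. When $G$ has exactly two components, taking $S$ to be one component gives an admissible set with $\delta(S)=\varnothing$, so the formula holds. When $G$ has three or more components, no admissible $S$ exists: $S$ and $V\setminus S$ would both have to be connected, yet together they contain at least three components. In that case the statement must be read with the convention that such graphs are excluded, or that the minimum over the empty family is $0$. I would state this caveat explicitly, in line with the connectedness assumption in Andrade and Dahl's setting.
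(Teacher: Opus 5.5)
Your proof is correct. Step~1 is exactly the reasoning the paper has in mind when it calls this theorem a direct consequence of Corollary~\ref{cor:minimum_on_G_at_centers}. Minimizers are centers $\bm g_{\scriptscriptstyle S,V\setminus S}$, and at these $\frac12\big(\xi(S)+\xi(V\setminus S)\big)=\frac{|V|}{2}\,\frac{|\delta(S)|}{|S||V\setminus S|}$.

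Where you go beyond the paper is the connectivity restriction. The paper gives no argument for it. The corollary by itself only yields the minimum over all nonempty proper $S$, and the restriction to sets with $G[S]$ and $G[V\setminus S]$ both connected is simply imported from Andrade and Dahl. Your Step~2 supplies a self-contained proof of this part. Since $\delta(S)$ is the disjoint union of the cuts $\delta(S_i)$ of the components of $G[S]$, summing $|\delta(S_i)|\geq\varphi^* s_i(n-s_i)$ and using $\sum_i s_i^2<|S|^2$ with $\varphi^*>0$ forces $\varphi(S)>\varphi^*$. This is the mediant inequality in disguise. It actually shows a bit more than you need: some component of a disconnected $S$ has strictly smaller ratio than $S$ itself. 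So the direct comparison with a single component does work, and every minimizer, not just some, has both sides connected.

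Your Step~3 is also a legitimate correction of the statement as phrased. For a graph with three or more components, $b(G)=0$, but the family over which the minimum is taken is empty. So the formula needs the connectedness hypothesis of Andrade and Dahl, or at most two components, where, as you show, it holds with value $0$.
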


That is, $b(G)$ corresponds to a sparsest cut in $G$; we want to partition $V$ into $S$ and its complement in $V$, so that both of $S$ and $V\backslash S$ are large sets, but with only few edges between $S$ and $V\backslash S.$ The $\ell_1$-Fiedler vector $\bm x^S\in\mathcal{G}$ for $b(G)$ corresponding to a sparsest cut $\delta(S)$ is $\bm{x}^S=(x_v)_{v\in V}$ given by
     \[x_v=\left\{\begin{array}{rl}  \frac{1}{2|S|},  & \text{ if }  v \in S,~\\ -\frac{1}{2|V\backslash S|}, &  \text{ if } v \in V\backslash S.\end{array}\right.\]

\section{Examples}\label{sec:examples}

In this section, we illustrate the results of the previous sections on several graph families. For each example, the goal is to describe the points of $\mathcal{G}$ that realize the optimization problems defining $b(G)$ and $B(G)$ and to explicitly count the corresponding $\ell_1$-Fiedler vectors. This also provides concrete examples for the counting problem studied in the next section. Only $\ell_1$-Fiedler vectors in $\mathcal{G}$ are considered here, as otherwise there would be an infinite number of $\ell_1$-Fiedler vectors in many cases. We omit stating this restriction for each example throughout this section.

Each $\ell_1$-Fiedler vector for $B(G)$ corresponds to an ordered quasi-bipartition $(P,N)$ of $V$, see Proposition~\ref{prop:sum_xis}. If $G$ has at least two vertices of same highest degree, then  each of $P$ and $N$ is a subset of  non-adjacent vertices of highest degree of $G$. If $G$ has only one vertex of highest degree, then one of $P,N$ consists of only this vertex, and the other one is a subset of non-adjacent vertices among the vertices of second-highest degree. This observation allows counting the number of $\ell_1$-Fiedler vectors for $B(G).$

\subsection{The complete graph $K_n$}

The complete graph on $n$ vertices $K_n$ has $B(K_n)= n-1.$ The quasi-bipartition $(P,N)$ corresponding to an $\ell_1$-Fiedler vector for $B(K_n)$ has exactly one vertex from $K_n$ in $P$ and another one in $N$. Then, the number of $\ell_1$-Fiedler vectors for $B(K_n)$ is $n^2-n.$

On the other hand, $K_n$ has $b(K_n)=\frac{n}{2}$, also see~\cite{andrade2024combinatorial}. Any partition of $V$ into two non-empty subsets of vertices gives an  $\ell_1$-Fiedler vector for $b(K_n)$. Then, the number of $\ell_1$-Fiedler vectors for $b(K_n)$ is $2^n-2.$

Note that the number of vertices and facets of the polytope $\bar{\mathcal{F}}$, the $(n-1)$-dimensional cuboctahedron~\cite{doehlert1972experimental}, is precisely $n^2-n$ and $2^n-2$, respectively.  Each vertex of $\bar{\mathcal{F}}$ gives a solution for the minimization problem in Equation~(\ref{eq:B(G)}). Each center gives a solution to the optimization problem in Equation~(\ref{eq:b(G)}); each facet of $\bar{\mathcal{F}}$ contains one center.

\subsection{The wheel graph $W_n$}

Let $n\geq 4.$ The wheel graph on $n$ vertices consists of a cycle on $n-1$ vertices and one additional vertex that is adjacent to each vertex of the cycle. $W_n$ has $B(W_n)=\frac{n+2}{2}.$
In the ordered quasi-bipartition $(P,N)$ corresponding to an $\ell_1$-Fiedler vector for $B(W_n)$, one of $P$ and $N$ consists of the high-degree vertex, and the other one consists of a non-empty subset of pairwise non-adjacent vertices from the cycle on $n-1$ vertices. The number of ways to choose such a subset is given by the Lucas number $L(n-1)$ minus $1$, see~\cite{Prodinger82}. The Lucas numbers $L(n)$ satisfy $L(n)=L(n-1)+L(n-2)$ for $n \geq 2$, and $L(0)=2$, and $L(1)=1.$ 
Then, the number of $\ell_1$-Fiedler vectors for $B(W_n)$ is
\[2\left(  \left( \frac{1+\sqrt{5}}{2}\right)^{n-1}+\left(\frac{1-\sqrt{5}}{2}\right)^{n-1}  \right)-2 .\]
This is twice the Lucas number $L(n-1)$ minus $2$.

In order to determine $b(W_n)$, we observe that each $\ell_1$-Fiedler vector for $b(W_n)$ corresponds to a partition of $V$ into two subsets, where one subset consists of a set of $x$ consecutive vertices on the cycle, for some integer $x.$ Then $b(W_n) \leq \frac{n}{2}\frac{x+2}{x(n-x))}$. We verify that this function is minimized if we take $x=\left\lfloor\sqrt{2n+4}-2\right\rfloor$ or $x=\left\lceil\sqrt{2n+4}-2\right\rceil$. It follows that
\[b(W_n)=\min\left\{\frac{n}{2}\frac{\left\lfloor\sqrt{2n+4}-2\right\rfloor+2}{\left\lfloor\sqrt{2n+4}-2\right\rfloor(n-\left\lfloor\sqrt{2n+4}-2\right\rfloor)}\ \ ,\  \frac{n}{2}\frac{\left\lceil\sqrt{2n+4}-2\right\rceil+2}{\left\lceil\sqrt{2n+4}-2\right\rceil(n-\left\lceil\sqrt{2n+4}-2\right\rceil)} \right\}.\]  

There are $(2n-2)$ $\ell_1$-Fiedler vectors for $b(W_n)$.

We remark that in~\cite{andrade2024combinatorial} the formula $b(W_n)=\frac{n}{n-2}$ is given, which coincides with the formula for $b(W_n)$ given here for $n=4,5,6,7,8$ but gives an incorrect value for $n=9.$ We have $b(W_9)=\frac{5}{4}.$

\subsection{The cycle graph $C_n$}\label{sec:cycle}

Let $n \geq 3.$ The cycle graph on $n$ vertices $C_n$ has $B(C_n)=2.$  Each $\ell_1$-Fiedler vector for $B(C_n)$
corresponds to an ordered quasi-bipartition $(P,N)$ of the vertex set of $C_n$, where both of $P$ and $N$ are non-empty subsets of pairwise non-adjacent vertices. Counting the number of $\ell_1$-Fiedler vectors for $B(C_n)$ then generalizes the problem of counting the number of subsets of pairwise non-adjacent vertices from a cycle  from~\cite{Prodinger82}. Here we  count the number of two  disjoint subsets of pairwise non-adjacent vertices, instead of just one subset. Each such ordered quasi-bipartition $(P,N)$ can be encoded by a word $w$ of length $n$ from the alphabet $\{r,b,0\}$, where no two consecutive $r$ and no two consecutive $b$ appear, and also the first and the last entry are not both $b$ and not both $r$. Also, at least one entry $r$ and one entry $b$ is needed in $w$. The entries in the word with letter $r$ are elements from $P$, entries with letter $b$ are elements from $N$, and elements with letter $0$ get coordinate $0.$ We can model this with a directed graph, whose adjacency matrix $A$ is 
\[A=\begin{blockarray}{cccc}
r & b & 0 \\
\begin{block}{(ccc)c}
  0 & 1 & 1  & r \\
  1 & 0 & 1  & b \\
  1 & 1 & 1  & 0 \\
\end{block}
\end{blockarray}\]
 The number of words $w$, but maybe only using at most two instead of all three letters, is the number of closed walks of length $n$ in this graph. This number of words is equal to the trace of $A^n.$
 The eigenvalues of $A$ are $1-\sqrt{2}$, $1+\sqrt{2}$, and $-1$. Then, there are 
$(1-\sqrt{2})^n+(1+\sqrt{2})^n+(-1)^n$ such words $w.$ Note that we also counted words that do not use both letters $b$ and $r$. Therefore, we need to subtract the number of words that only use at most one of letters $r$ and $b$. By~\cite{Prodinger82}, we subtract twice the Lucas number $L(n)$. Since also the empty subset is counted in $L(n)$, we add $1.$ The number of $\ell_1$-Fiedler vectors for $B(C_n)$ is
\[(1+\sqrt{2})^n +(1-\sqrt{2})^n  - 2\left(  \left( \frac{1+\sqrt{5}}{2}\right)^{n}+\left(\frac{1-\sqrt{5}}{2}\right)^{n}  \right) +(-1)^n +1.\]
This is the difference between the Pell-Lucas number $P\ell(n)$ and twice the Lucas number $L(n)$, plus $(-1)^n$ plus $1$.
The Pell-Lucas numbers $P\ell(n)$ satisfy $P\ell(n)=2P\ell(n-1)+P\ell(n-2)$ for $n\geq 2$, and $P\ell(0)=P\ell(1)=2.$
%The Lucas numbers $L(n)$ satisfy $L(n)=L(n-1)+L(n-2)$ for $n \geq 2$, and $L(0)=2$, and $L(1)=1.$\\

The cycle graph $C(n)$ has $b(C_n)= \frac {n}{ \lceil \frac{n}{2}  \rceil \lfloor \frac{n}{2} \rfloor  }$, see also~\cite{andrade2024combinatorial}. 
For $n$ even, there are $n$ $\ell_1$-Fiedler vectors for $b(C_n)$, and for $n$ odd there are $2n$ $\ell_1$-Fiedler vectors for $b(C_n)$. These correspond to the partition of $V$ into two  subsets of $\lceil \frac{n}{2}  \rceil$ and $\lfloor \frac{n}{2}  \rfloor$ consecutive vertices of $C_n$.

\subsection{The path graph $P_n$}

Let $n \geq 4.$ The path graph on $n$ vertices $P_n$ has $B(P_n)= 2$. 
 Each $\ell_1$-Fiedler vector for $B(P_n)$
corresponds to an ordered quasi-bipartition $(P,N)$ of the vertex set of $P_n$, where both of $P$ and $N$ are non-empty subsets of pairwise non-adjacent vertices. As in Section~\ref{sec:cycle}, each such ordered quasi-bipartition $(P,N)$ can be encoded by a word $w$ of length $n$ from the alphabet $\{r,b,0\}$, where no two consecutive $r$ and no two consecutive $b$ appear. In addition, at least one entry $r$ and one entry $b$ are needed in $w$. But now, the first and last entry of $w$ can be any letter. 
The number of such words $w$, but maybe only using at most two instead of all three letters, is the number of length walks $n-1$ in the graph with adjacency matrix $A$ from Section~\ref{sec:cycle}. This number of words is equal to the sum of the elements of $A^{n-1}$.\\
We define $a(n)=\frac{ (1+\sqrt{2})^n + (1-\sqrt{2})^n +2(-1)^n}{4}$, \ $b(n)=\frac{ (1+\sqrt{2})^n + (1-\sqrt{2})^n -2(-1)^n}{4}$, \ $c(n) = \frac{ (1+\sqrt{2})^n - (1-\sqrt{2})^n}{2\sqrt{2}}$, \ and $d(n)=\frac{ (1+\sqrt{2})^n + (1-\sqrt{2})^n }{2}.$
It follows by induction on $n$, that for $n\geq 1$, the matrix $A^n$ has the form
\[
A^n=\begin{blockarray}{cccc}
r & b & 0 \\
\begin{block}{(ccc)c}
  a(n) & b(n) & c(n)  & r \\
  b(n) & a(n) & c(n)  & b \\
  c(n) & c(n) & d(n)  & 0 \\
\end{block}
\end{blockarray}
 \]
The sum of the number of elements in $A^n$ is then
\[2a(n)+2b(n)+4c(n)+d(n) = \frac{(1+\sqrt{2})^{n+2}+(1-\sqrt{2})^{n+2}}{2}.\]
The number of words $w$, but maybe only using at most two instead of all three letters, then is
\[\frac{(1+\sqrt{2})^{n+1}+(1-\sqrt{2})^{n+1}}{2}.\]
We need to subtract the number of words that use at most one of the letters $r$ and $b$. By~\cite{Prodinger82}, we subtract twice the Fibonacci number 
\[Fib(n+2)=\frac{1}{\sqrt{5}}\left(
\left( \frac{1+\sqrt{5}}{2}\right)^{n+2}-\left( \frac{1-\sqrt{5}}{2}\right)^{n+2}\right),\]
and add $1$ because the word consisting of only zeros is also counted with the Fibonacci number in~\cite{Prodinger82}.

We find that the number of $\ell_1$-Fiedler vectors for $B(P_n)$ is 
\[\frac{(1+\sqrt{2})^{n+1}+(1-\sqrt{2})^{n+1}}{2}-\frac{2}{\sqrt{5}}
\left(\left( \frac{1+\sqrt{5}}{2}\right)^{n+2}-\left( \frac{1-\sqrt{5}}{2}\right)^{n+2}\right)+1.\]
This is exactly the difference between the modified Pell number $Pe(n+1)$ and twice the Fibonacci number $Fib(n+2)$ plus 1.
The modified Pell numbers $Pe(n)$ satisfy $Pe(n)=2Pe(n-1)+Pe(n-2)$, for $n\geq 2$, and $Pe(1)=Pe(0)=1.$ The Fibonacci numbers $Fib(n)$ satisfy $Fib(n)=Fib(n-1)+Fib(n-2)$ for $n\geq 3$, and $Fib(1)=Fib(2)=1.$

The path graph $P(n)$ has
\[b(P_n)=\frac{n}{2} \frac {1}{ \lceil \frac{n}{2}  \rceil \lfloor \frac{n}{2} \rfloor},\] 
also see~\cite{andrade2024combinatorial}. 
For $n$ even, there are two $\ell_1$-Fiedler vectors for $b(P_n)$, and for $n$ odd there are four $\ell_1$-Fiedler vectors for $b(P_n)$. These correspond to the partition of $V$ into two  subsets of $\lceil \frac{n}{2}  \rceil$ and $\lfloor \frac{n}{2}  \rfloor$ consecutive vertices of $P_n$.  

\subsection{The complete bipartite graph $K_{n,m}$}

For $n=1$, the complete bipartite graph $K_{1,m}$ has $B(K_{1,m})=\frac{m+1}{2}$.
Each $\ell_1$-Fiedler vector for $K_{1,m}$ corresponds to an ordered quasi-bipartition $(P,N)$, where one of $P$ and $N$ is the vertex of high degree, and the other one is a non-empty subset of vertices from the $m$ vertices of the other bipartition class. 
The number of $\ell_1$-Fiedler vectors for $B(K_{1,m})$ is then
\[2^{m+1}-2.\]
For $\min\{n,m\}\geq 2$, $K_{n,m}$ has $B(K_{n,m})=\max\{n,m\}$. 
Assume $n>m>1$. Each $\ell_1$-Fiedler vector for $K_{n,m}$ corresponds to an ordered quasi-bipartition $(P,N)$, where each of  $P$ and $N$ is a non-empty subset of vertices from the bipartition class of $m$ vertices. For each of these $m$ vertices, there are three options, whether to assign it to $P$, $N$ or $0$. We then subtract the number of options where only $P$ and $0$ are used, or only $N$ and $0$ are used. We find that the number of $\ell_1$-Fiedler vectors for $B(K_{n,m})$ is
\[3^{m}-2^{m+1}+1.\]
Assume then that $n=m>1$. $K_{n,n}$ has $B(K_{n,n})=n$. 
Each $\ell_1$-Fiedler vector for $K_{n,n}$ corresponds to an ordered quasi-bipartition $(P,N)$, where each of $P$ and $N$ is a non-empty subset of vertices from the same bipartition class. 

If $P$ and $N$ belong to the same bipartiton class, then we have $2(3^n-2^{n+1}+1)$ $\ell_1$-Fiedler vectors, similar to the previous case. 

If $P$ and $N$ belong to different bipartition classes, we have  $2(2^n-1)^2$ $\ell_1$-Fiedler vectors. 

The number of $\ell_1$-Fiedler vectors for $B(K_{n,n})$  is then 
\[2(3^n-2^{n+1}+1) +  2(2^n-1)^2
= 2\left(4^n + 3^n -2^{n+2}+2  \right).\]

\section{A hardness result}\label{sec:hardness}

The results of the previous section show that, for several natural graph families, the set of
$\ell_1$-Fiedler vectors in $\mathcal G$ can be explicitly described and counted in closed form.
This naturally leads to the following general question: given a graph, how difficult is it to determine
the number of $\ell_1$-Fiedler vectors?

It is worth highlighting that, on the minimization side, Andrade and Dahl \cite{andrade2024combinatorial} already showed that the computation of $b(G)$ and a corresponding $\ell_1$-Fiedler vector is NP-hard, via the connection
between $b(G)$ and the sparsest cuts. 

In this section, we focus on the maximization side. Although Theorem \ref{theo:two_highest_geg} shows that the value of $B(G)$ admits a very simple expression, the structure of the subset of vectors of $\mathcal{G}$ that attain that value is not trivial in general; the associated counting problem is computationally intractable. 
\begin{theorem}
    The problem of counting the number of $\ell_1$-Fiedler vectors in $\mathcal G$ for the parameter $B(G)$ is \#P-complete.
\end{theorem}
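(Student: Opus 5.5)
The plan is to show membership in \#P directly, and hardness by a parsimonious-up-to-a-factor reduction from counting independent sets. For membership, note that by Proposition~\ref{prop:sum_xis} the elements of $\mathcal G$ are encoded by ordered quasi-bipartitions $(P,N)\in\Gamma$, which are polynomial-size certificates. Given $(P,N)$, one computes $\tfrac12(\xi(P)+\xi(N))$ exactly in rational arithmetic. Theorem~\ref{theo:two_highest_geg} gives $B(G)=\tfrac12(d_1+d_2)$ from the degree sequence. Comparing the two values is a polynomial-time check, so the counting problem lies in \#P.

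For hardness, I would first extract from the proof of Theorem~\ref{theo:two_highest_geg} an exact characterization of the optimal quasi-bipartitions. Equality $\tfrac12(\xi(P)+\xi(N))=\tfrac12(d_1+d_2)$ forces both inequalities in that chain to be tight.
\begin{itemize}
\item The first inequality is tight only if $|\delta(S)|=|S|\deg_{max}(S)$ for $S\in\{P,N\}$. Since $|\delta(S)|=\sum_{v\in S}\deg_G(v)-2|E(G[S])|$, this holds if and only if every vertex of $S$ has degree $\deg_{max}(S)$ and $S$ is independent.
\item The second inequality is tight only if $\deg_{max}(P)+\deg_{max}(N)=d_1+d_2$.
\end{itemize}
This is the observation already used informally at the start of Section~\ref{sec:examples}. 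I would state it cleanly as a lemma.

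The reduction is from \#\textsc{IndependentSets}, which is \#P-complete even for $3$-regular graphs (Greenhill). Given a $d$-regular graph $H$ on $m$ vertices, not complete (say $d=3$, $m\ge 5$), build $G$ by adding a new vertex $w$ adjacent to all vertices of $H$. Then $\deg_G(w)=m$, and every vertex of $H$ has degree $d+1<m$. So $w$ is the unique vertex of maximum degree, and the second largest degree is $d+1$.

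By the lemma, in any optimal $(P,N)$ one part must contain a vertex of degree $m$, so it equals $\{w\}$. The other part then consists of degree-$(d+1)$ vertices, that is, vertices of $H$. It must be independent in $G$, which is the same as being independent in $H$ because $w$ is not in it. Conversely, every pair $(\{w\},I)$ or $(I,\{w\})$ with $I$ a nonempty independent set of $H$ is optimal. Hence the number of $\ell_1$-Fiedler vectors in $\mathcal G$ for $B(G)$ equals $2(i(H)-1)$, where $i(H)$ counts all independent sets of $H$ including the empty one. This is the same pattern as the wheel example. From this count one recovers $i(H)$ in polynomial time, which gives \#P-hardness.

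The main obstacle is making the equality characterization airtight. In particular one must rule out optimal pairs where neither part contains $w$, or where a part mixes $w$ with other vertices. Both are excluded by requiring all vertices of a part to have equal degree together with the degree-sum condition. The second point to confirm is that the source problem is \#P-complete on the restricted class used. If the regular-graph version is inconvenient, an arbitrary graph $H$ works just as well with this construction, provided $\max\deg_H+1<m$. The lemma only needs the vertices of $H$ used in $N$ to share the second-largest degree. If degrees in $H$ vary, that fails, so I would first pad $H$ to a regular graph or simply cite the regular-graph hardness result.
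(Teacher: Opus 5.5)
Your proposal is correct and follows the paper's proof essentially step for step. Membership in \#P uses the polynomial-size certificate $(P,N)$, checked against $\tfrac12(d_1+d_2)$. Hardness reduces from Greenhill's \#P-complete problem of counting independent sets in $3$-regular graphs by adding a universal vertex, and the resulting count is exactly twice the number of nonempty independent sets. Your explicit tightness lemma (each part is independent with all its degrees equal to $\deg_{\max}$, and the degree-sum condition rules out pairs avoiding $w$) simply spells out what the paper's terser argument uses implicitly.

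The only flaw is your closing remark that an arbitrary $H$ ``works just as well''. It does not, since only independent sets among the maximum-degree vertices of $H$ would be counted. You concede this yourself in the next sentence, and the main argument never relies on it.
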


\begin{proof}
The problem is in \#P, since given a graph $G=(V,E)$ and a candidate vector $\bm{g}\in\mathcal{G}$, one can verify in polynomial time whether it attains the maximum $f(\bm{g})=B(G)$ using Proposition~\ref{prop:sum_xis} and Theorem~\ref{theo:two_highest_geg}.

To prove \#P-hardness, we give a polynomial-time counting reduction from the problem of counting non-empty independent sets in $3$-regular graphs, which is \#P-complete by Greenhill~\cite{Greenhill2000ComplexityCounting}. Let $G$ be a $3$-regular graph on $n>4$ vertices. Construct $G'$ by adding a new vertex $z$ adjacent to every vertex of $G$. Then $\deg_{G'}(z)=n$, while every vertex of $G$ has degree $4$ in $G'$. Hence $B(G')=\frac12(n+4)$.

We claim that the optimal ordered quasi-bipartitions of $G'$ are precisely the pairs $(\{z\},I)$ and $(I,\{z\})$, where $I$ is a non-empty independent set of $G$.

On the one hand, if $(P,N)$ is optimal for $G'$, we must have
\[\xi(P)+\xi(N)=\operatorname{deg}_{\max}(P)+\operatorname{deg}_{\max}(N)= n+4.\]
Since one of the two sets must contain the unique degree-$n$ vertex $z$, the first equality forces that set to be exactly $\{z\}$. The other set must consist of degree-$4$ vertices and must be pairwise non-adjacent; equivalently, it is a non-empty independent set of the original graph $G$.

Conversely, if $I$ is a non-empty independent set of $G$, then in $G'$ we have $\xi(\{z\})=n$ and $\xi(I)=4$. Therefore
\[\frac12\bigl(\xi(\{z\})+\xi(I)\bigr)=\frac12(n+4)=B(G'),\]
so both $(\{z\},I)$ and $(I,\{z\})$ are optimal.

Thus the number of optimal points of $\mathcal G$ for $B(G')$ is twice the number of non-empty independent sets of $G$. Dividing by $2$ gives the desired polynomial-time counting reduction, and the result follows.
\end{proof}

\bibliographystyle{abbrv}

\bibliography{refs}

@article{andrade2024combinatorial,
  title={Combinatorial {F}iedler theory and graph partition},
  author={Andrade, Enide and Dahl, Geir},
  journal={Linear Algebra and its Applications},
  volume={687},
  pages={229--251},
  year={2024},
  publisher={Elsevier}
}

@article{fiedler1973algebraic,
  title={Algebraic connectivity of graphs},
  author={Fiedler, Miroslav},
  journal={Czechoslovak mathematical journal},
  volume={23},
  number={2},
  pages={298--305},
  year={1973},
  publisher={Institute of Mathematics, Academy of Sciences of the Czech Republic}
}

@book{chung1997spectral,
  author    = {Chung, Fan R. K.},
  title     = {Spectral Graph Theory},
  series    = {CBMS Regional Conference Series in Mathematics},
  volume    = {92},
  publisher = {American Mathematical Society},
  address   = {Providence, RI},
  year      = {1997}
}

@article{chang2016one,
  author  = {Chang, K. C. and Shao, Sihong and Zhang, Dong},
  title   = {The \(1\)-{Laplacian} {Cheeger} cut: Theory and algorithms},
  journal = {Journal of Computational Mathematics},
  volume  = {33},
  number  = {5},
  pages   = {443--467},
  year    = {2015}
}

@article{doehlert1972experimental,
  author  = {Doehlert, D. H. and Klee, V. L.},
  title   = {Experimental designs through level reduction of the \(d\)-dimensional cuboctahedron},
  journal = {Discrete Mathematics},
  volume  = {2},
  number  = {4},
  pages   = {309--334},
  year    = {1972},
  doi     = {10.1016/0012-365X(72)90011-8}
}

@article{ardila2010root,
  author  = {Ardila, Federico and Beck, Matthias and Ho{\c{s}}ten, Serkan and Pfeifle, Julian and Seashore, Kim},
  title   = {Root polytopes and growth series of root lattices},
  journal = {SIAM Journal on Discrete Mathematics},
  volume  = {25},
  number  = {1},
  pages   = {360--378},
  year    = {2011},
  doi     = {10.1137/090749293}
}

@misc{KannanRoy2026,
author = {Rajesh~Kannan, M. and Roy, R.},
title = {Structural and extremal properties of $\ell_1$-{F}iedler value. {A}rxiv.org/pdf/2601.05771},
year = {2026},
publisher = {Arxiv.org/pdf/2601.05771},
}

@article{anderson1985eigenvalues,
  author  = {Anderson, William N. and Morley, Thomas D.},
  title   = {Eigenvalues of the {Laplacian} of a graph},
  journal = {Linear and Multilinear Algebra},
  volume  = {18},
  number  = {2},
  pages   = {141--145},
  year    = {1985},
  doi     = {10.1080/03081088508817681}
}

@article{zhang2011laplacian,
  author  = {Zhang, Xiao-Dong},
  title   = {The {Laplacian} eigenvalues of graphs: {A} survey},
  journal = {arXiv preprint arXiv:1111.2897},
  year    = {2011}
}

@article{hall1970r,
  title={An $r$-dimensional quadratic placement algorithm},
  author={Hall, Kenneth M},
  journal={Management science},
  volume={17},
  number={3},
  pages={219--229},
  year={1970},
  publisher={INFORMS}
}

@article{koren2005drawing,
  title={Drawing graphs by eigenvectors: {T}heory and practice},
  author={Koren, Yehuda},
  journal={Computers \& Mathematics with Applications},
  volume={49},
  number={11-12},
  pages={1867--1888},
  year={2005},
  publisher={Elsevier}
}

@book{brouwer2011spectra,
  title={Spectra of graphs},
  author={Brouwer, Andries E and Haemers, Willem H},
  year={2011},
  publisher={Springer Science \& Business Media}
}

@article{Greenhill2000ComplexityCounting,
  author  = {Greenhill, Catherine},
  title   = {The complexity of counting colourings and independent sets in sparse graphs and hypergraphs},
  journal = {Computational Complexity},
  volume  = {9},
  number  = {1},
  pages   = {52--72},
  year    = {2000},
  month   = nov
}

@book{graham1994concrete,
  author    = {Ronald L. Graham and Donald E. Knuth and Oren Patashnik},
  title     = {Concrete Mathematics: A Foundation for Computer Science},
  edition   = {2},
  year      = {1994},
  publisher = {Addison-Wesley},
  isbn      = {0-201-55802-5}
}

@article{Prodinger82,
  title={Fibonacci numbers of graphs},
  author={Prodinger, H. and Tichy, R. F.},
  journal={The Fibonacci Quaterly},
  volume={20},
  number={1},
  pages={16--21},
  year={1982}
}

@article{Fukuda2004,
  author  = {Komei Fukuda},
  title   = {From the Zonotope Construction to the {M}inkowski Addition of Convex Polytopes},
  journal = {Journal of Symbolic Computation},
  volume  = {38},
  number  = {4},
  pages   = {1261--1272},
  year    = {2004},
  doi     = {10.1016/j.jsc.2003.08.007}
}

\end{document}